\newtheorem*{theorem*}{Theorem}
\newtheorem{theorem}{Theorem}[section]
\newtheorem{lemma}[theorem]{Lemma}
\newtheorem{claim}[theorem]{Claim}
\newtheorem{assumption}[theorem]{Assumption}
\newtheorem{example}[theorem]{Example}
\newtheorem{proposition}[theorem]{Proposition}
\newtheorem{definition}[theorem]{Definition}
\newtheorem{observation}[theorem]{Observation}
\newcommand{\mysubset}{S} 
\newcommand{\answer}{a_1} 
\DeclareMathOperator*{\argmin}{arg\,min}
\DeclareMathOperator*{\tr}{Tr}
\begin{document}

\title{Eliciting Thinking Hierarchy without a Prior}

\author{Yuqing Kong\footnote{corresponding author} \and Yunqi Li \and Yubo Zhang \and Zhihuan Huang \and Jinzhao Wu \and 
The Center on Frontiers of Computing Studies,\\ Peking University\\
\texttt{\{yuqing.kong, Liyunqi, zhangyubo18, zhihuan.huang, jinzhao.wu\}@pku.edu.cn}}
\date{}

\newcommand\yk[1]{}
\newcommand\hzh[1]{}
\newcommand\yq[1]{}

\maketitle

\begin{abstract}

When we use the wisdom of the crowds, we usually rank the answers according to their popularity, especially when we cannot verify the answers. However, this can be very dangerous when the majority make systematic mistakes. A fundamental question arises: can we build a hierarchy among the answers \textit{without any prior} where the higher-ranking answers, which may not be supported by the majority, are from more sophisticated people? To address the question, we propose 1) a novel model to describe people's thinking hierarchy; 2) two algorithms to learn the thinking hierarchy without any prior; 3) a novel open-response based crowdsourcing approach based on the above theoretic framework. In addition to theoretic justifications, we conduct four empirical crowdsourcing studies and show that a) the accuracy of the top-ranking answers learned by our approach is much higher than that of plurality voting (In one question, the plurality answer is supported by 74 respondents but the correct answer is only supported by 3 respondents. Our approach ranks the correct answer the highest without any prior); b) our model has a high goodness-of-fit, especially for the questions where our top-ranking answer is correct. To the best of our knowledge, we are the first to propose a thinking hierarchy model with empirical validations in the general problem-solving scenarios; and the first to propose a practical open-response based crowdsourcing approach that beats plurality voting without any prior.

\end{abstract}

\section{Introduction}

The wisdom of the crowds has been proved to lead to better decision-making and problem-solving than that of an individual, especially when we do not have sufficient prior knowledge to identify individual experts \cite{surowiecki2005wisdom,budescu2015identifying,weller2007cultural}. Plurality is one of the most popular ways to aggregate the crowd's opinions. The opinions are usually ranked according to their popularity. However, it can be very dangerous when the majority are systematically biased. Here is a real-world study we perform. We have asked multiple top university students the following question. 

\textit{The radius of Circle A is 1/3 the radius of Circle B. Circle A rolls around Circle B one trip back to its starting point. How many times will Circle A revolve in total?}


\begin{figure}[!ht]
\centering
\includegraphics[width=7cm]{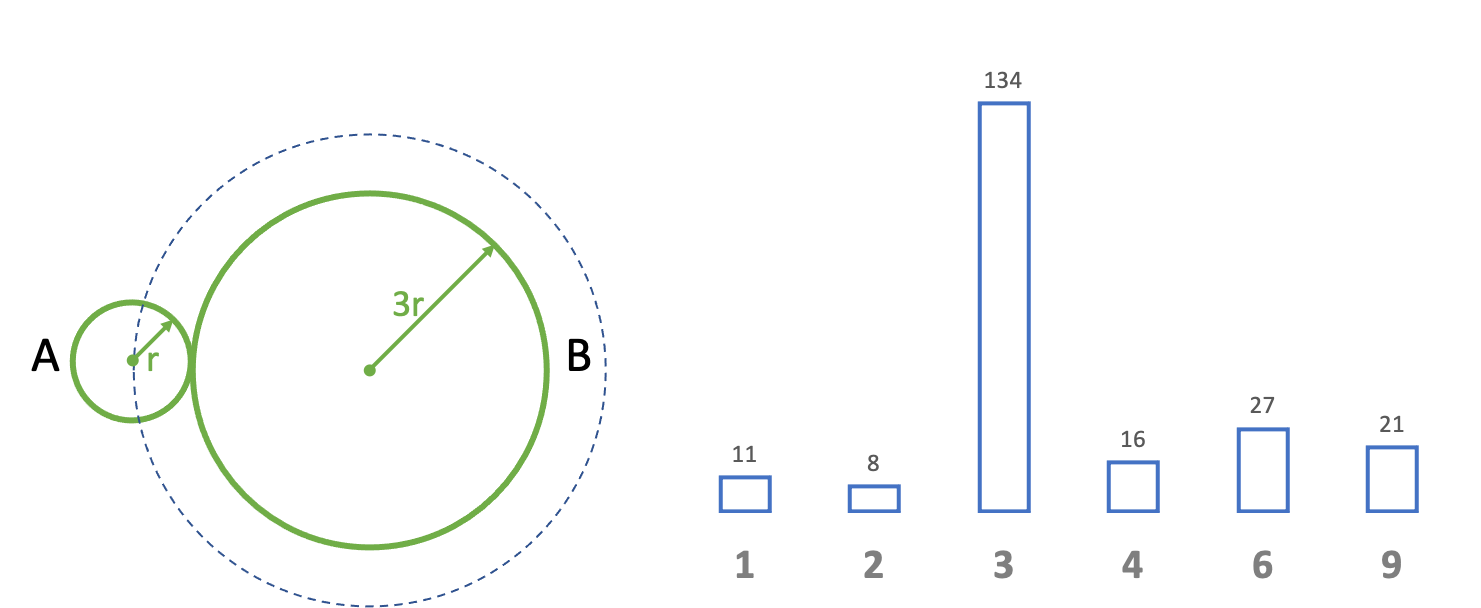}
\caption{Collected answers of the circle problem}
\label{fig:circle}
\end{figure}

We have collected answers ``1 (11 people), 2 (8 people), 3 (134 people), 4 (16 people), 6 (27 people), 9 (21 people)''. The plurality answer is ``3'', the ratio between the big circle's circumference and the small circle's. However, the correct answer is ``4''\footnote{Interested readers are referred to \url{https://math.stackexchange.com/questions/1351058/circle-revolutions-rolling-around-another-circle} for explanations.} which is only supported by 16 people.

With prior knowledge like the expertise level of each individual respondent, we may be able to identify the correct answer. However, sometimes it's quite expensive and difficult to obtain prior knowledge, especially in new fields. 

To address the above issue, \citet{prelec2017solution} propose an innovative approach, surprisingly popular. They prepare multiple choices, ask the respondents to pick one option, and more importantly, predict the distribution over other people's choices. They use the predictions to construct a prior distribution over the choices, and then select the choice that is more popular than the prior such that the bias is corrected. Many other work \cite{liuchen,dasgupta2012social,rothschild2011forecasting} develop the idea of using the prior or the predictions to correct the bias. 

However, first, it's not applicable to employ the previous approaches into the running example, the circle problem, because they require prior knowledge to design the choices. It's also effortful for respondents to report a whole distribution over all choices.

Second, previous works focus on using the predictions to correct bias, while it's intrinsically interesting to build a thinking hierarchy using their predictions. This leads to a hierarchy among the answers as well. The famous cognitive hierarchy theory (CHT) \cite{stahl1993evolution,stahl1995players, camerer2004cognitive} builds a thinking theory in the scenarios when people play games such that we can learn the actions of players of different sophistication levels. Nevertheless, CHT is designed only for a game-theoretic setting. 

We are curious about building a thinking theory in general problem-solving scenarios. The key insight is that people of a more sophisticated level know the mind of lower levels, but not vice versa \cite{camerer2004cognitive, kong2018eliciting}. We want to apply the insight to learn the answers of people of different sophistication levels, called the \emph{thinking hierarchy}, without any prior. 

\paragraph{Key question} We aim to build a practical approach to learn the thinking hierarchy \textit{without any prior}. Based on the thinking hierarchy, we can rank the answers such that the higher-ranking answers, which may not be supported by the majority, are from more sophisticated people. 

In addition to building a thinking theory in the general problem-solving scenarios, in practice, there are multiple reasons why we want the hierarchy rather than only the best. First, for some questions like subjective questions (e.g. why are bar chairs high?), there may be more than one high-quality answer and the full hierarchy provides a richer result. Second, the hierarchy among the answers helps to understand how people think better, which is important especially when we try to elicit people's opinions about a policy.

\paragraph{Our approach} We follow the framework of asking for answers and predictions simultaneously and extend it to a more practical open-response based paradigm. The paradigm asks a single open response question and asks for both each respondent's answer and prediction \footnote{Unlike previous work, the prediction in our model is not a distribution but an answer the respondent thinks other people may answer.} for other people's answers. For example, in the circle problem, a respondent can provide: answer: ``4'', prediction: ``3''. We then construct an answer-prediction matrix $\mathbf{A}$ that records the number of people who report a specific answer-prediction pair (e.g. Figure~\ref{fig:origin} shows that 28 people answer ``3'' and predict other people answer ``6''.).

To learn the thinking hierarchy, we propose a novel model. Our model describes how people of different sophistication levels answer the question and more importantly, predict other people's answers. The joint distribution over a respondent's answer and prediction depends on the latent parameters that describe people's thinking hierarchy. We show that given the joint distribution over a respondent's answer and prediction, we can infer the latent thinking hierarchy by solving a new variant of the non-negative matrix factorization problem, called non-negative congruence triangularization (NCT), which may be of an independent interest. Based on the analysis of NCT, we provide two simple answer-ranking algorithms and show that with proper assumptions, the algorithms will learn the latent thinking hierarchy given the joint distribution over a respondent's answer and prediction. 


Finally, we show that the answer-prediction matrix collected by the paradigm is a proxy for the joint distribution over a respondent's answer and prediction. We implement the NCT based answer-ranking algorithms on the answer-prediction matrix. The default algorithm ranks the answers to maximize the sum of the square of the elements in the upper triangular area of the matrix. In a variant version, to allow different answers to have the same sophistication level, the answers are partitioned to compress the matrix. The algorithm maximizes the sum of the square of the upper triangular area of the compressed matrix. 



In addition to the above theoretic framework, we also run empirical studies by asking people questions in various areas including math, Go, general knowledge, and character pronunciation.


\begin{example}[Empirical results of the circle problem]\label{eg:circle}
In the circle problem, we have collected the empirical answer-prediction matrix (Figure~\ref{fig:origin}) and ranked it (Figure~\ref{fig:ranked}) based on the default algorithm. The default ranking algorithm does not use any diagonal element. Thus, for ease of illustration, the diagonal elements are modified (i.e., for all $a$, $A_{a,a}$ is modified to the number of respondents who answer $a$) such that we can compare our method to the plurality voting visually. 
\end{example}

More empirical results will be illustrated in Section~\ref{sec:results}. We show the superiority of our algorithm by comparing our algorithm to plurality voting by the accuracy of the top-ranking answers. We also test the goodness-of-fit of our model based on the collected data set. To summarize, we provide a novel theoretic framework to study people's thinking hierarchy in the problem-solving scenarios and a practical open-response based crowdsourcing approach that outputs a high-quality answer only supported by 16 people when the wrong answer is supported by 134 people (another question is 3 vs. 74) without any prior.


\begin{figure*}\centering
  \subfigure[Original]{\includegraphics[height=.25\textwidth]{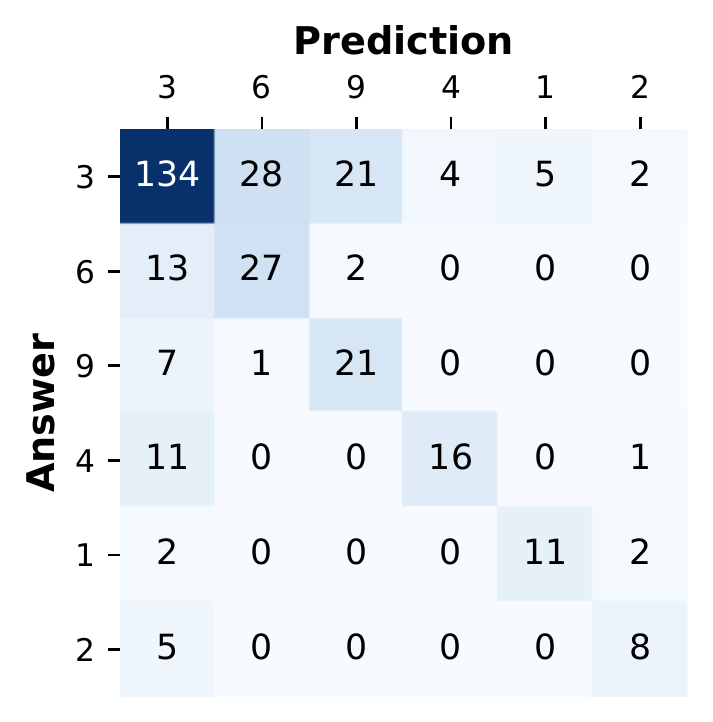}\label{fig:origin}}
  \subfigure[Ranked]{\includegraphics[height=.25\textwidth]{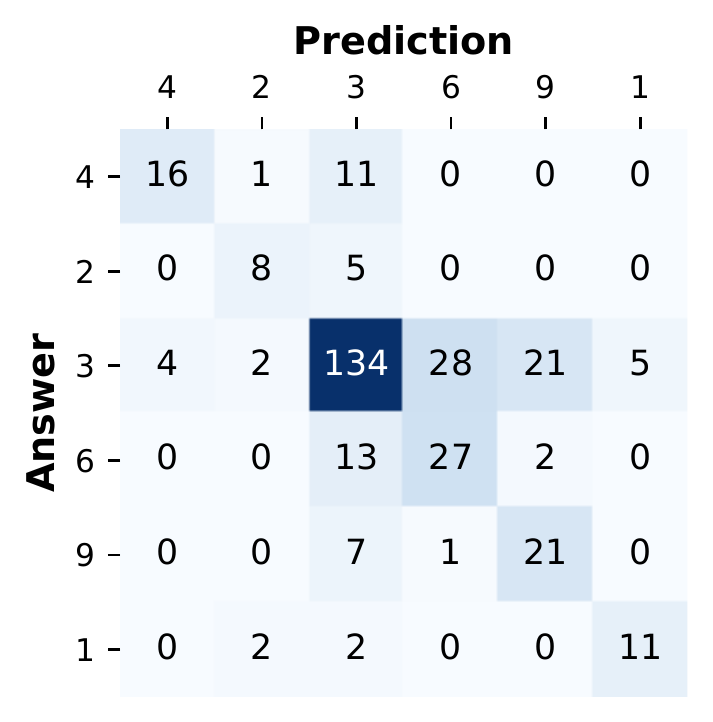}\label{fig:ranked}}
  \caption{The empirical results of the circle problem.} 
  \label{fig:circledata}
\end{figure*}



\subsection{Related work}

\paragraph{Information aggregation with the second order information} \citet{prelec2004bayesian,prelec2017solution} start the framework that asks the respondents both their answers and predictions for the distribution over other people's answers. However, to implement the framework, the requester needs to design a multi-choice question whose choices may require prior knowledge. The effort for a distribution report is non-minimal to many respondents and the quality can be an issue since most people are not perfect Bayesian. \citet{kong2018eliciting} study how to elicit thinking hierarchy theoretically by people's predictions and also assumes that more sophisticated people can reason about the mind of less sophisticated people. However, they focus on multi-choice questions, and in their framework, agents either need to perform multiple tasks or report non-minimal distributions. Thus, it's difficult to perform empirical studies using their framework and there is also no empirical validation. Moreover, they assume more sophisticated people can reason about the mind of ALL less sophisticated people while our model allows more sophisticated people cannot reason about some less sophisticated people. Two recent works \citet{hosseini2021surprisingly,schoenebeck2021wisdom} study how to aggregate people's votes to rank a set of predetermined candidates (e.g. ranking paintings based on price \cite{hosseini2021surprisingly}, or two presidential candidates \cite{schoenebeck2021wisdom}) better by using people's predictions. Both of them treat the pairwise comparisons of the candidates as the elicited signals and use people's predictions to improve the signal quality. In contrast, rather than eliciting ranking based on price or preference, we elicit the thinking hierarchy among people by using people's predictions for other people to determine the hierarchy over the answers. Like \citet{prelec2017solution, hosseini2021surprisingly,schoenebeck2021wisdom}, there are other works that use people's predictions to reduce the bias of the collected feedback. For example, \citet{dasgupta2012social} use people's predictions to reduce the bias caused by interactions between users on a social network. \citet{rothschild2011forecasting} show that voters' expectations for other people's votes are more informative than their intentions. In addition to the discrete setting, recently, a growing literature, including \citet{palley2019extracting,martinie2020using,chen2021wisdom,wilkening2022hidden,palley2022boosting,peker2022extracting}, aggregate forecasts with additional second order information like each forecaster's expectation for the average of other forecasters' forecasts.

\paragraph{Peer prediction} Starting from \citet{MRZ05}, a series of work (e.g.\cite{MRZ05,prelec2004bayesian,dasgupta2013crowdsourced,2016arXiv160303151S,liuchen,kong2020dominantly}) focus on eliciting information without a prior by designing incentive-compatible mechanisms. This field is called peer prediction, or information elicitation without prior. In contrast, this work focus on how to aggregate information and identify high-quality information without a prior. \citet{liuchen,prelec2017solution, hosseini2021surprisingly,schoenebeck2021wisdom} focus on information aggregation without prior. However, they do not focus on the problem of learning thinking hierarchy like this work.

\paragraph{Bounded rationality} Starting from \citet{simon1990bounded}, the term ``bounded rationality'' describes the decision maker's cognitive limitations. \citet{stahl1993evolution} propose a behavioral model of bounded rationality to predict people's behaviors in strategic games, the level-k theory. Level-k theory assumes that players have different levels of sophistication. Level-0 players play non-strategically, level-1 players play optimal response to level-0 players... A variant of level-k theory is cognitive hierarchy theory \cite{stahl1995players, camerer2004cognitive} where level-k players believe that lower-level players' percentages follow a certain type of distribution. Our Thinking hierarchy model is conceptually similar to level-k but focuses on a non-game setting. Moreover, the level-k theory focus on estimating the average levels of a population while we focus on identifying each respondent's level. 

\section{Learning Thinking Hierarchy}

We first introduce our model for thinking hierarchy. \citet{tversky1974judgment} propose that people have two systems, a fast and intuitive system, and a slow and logical system. For example, Alice starts to solve the circle problem. When she reads the question, she can run her intuitive system 1 and obtain answer ``3''. However, when she starts to think carefully, she runs her more careful system 2 to obtain answer ``4''.

We propose a more general model where people can run multiple oracles to approach the question during their thinking process. Conceptually similar to the cognitive hierarchy theory \citep{camerer2004cognitive}, we assume the oracles have different levels. People usually run lower level oracles before the higher level oracles. In the previous example, system 1 is the lower level oracle and system 2 is the higher one. We assume that Alice runs system 2 after system 1. Thus, in our model, people who know the answer is ``4'' can predict that other people answer ``3'', which is hypothesized in \citet{kong2018eliciting}. 
It's also possible that some people run the most sophisticated oracle directly without running the lower ones. Thus, we design the model such that it does not require the higher-type to be able to predict ALL lower types. We also allow the oracles' outputs to be random. 

Section~\ref{sec:web} introduces the model of thinking hierarchy. Section~\ref{sec:nct} reduces the model inference problem to a matrix decomposition problem. Section~\ref{sec:infer} and section~\ref{sec:proxy} show how to learn the thinking hierarchy.

\subsection{Thinking hierarchy}\label{sec:web}


Fixing a question $q$ (e.g. the circle problem), $T$ denotes the set of thinking types. $A$ denotes the set of possible answers. Both $T$ and $A$ are finite sets. $\Delta_A$ denotes all possible distributions over $A$. We sometimes use ``prob'' as a shorthand for probability. 

\paragraph{Generating answers} We will describe how people of different thinking types generate answers. 

\begin{definition}[Oracles of thinking types $\mathbf{W}$ ]
An answer generating oracle maps the question to an (random) answer in $A$. Each type $t$ corresponds to an oracle $O_t$. The output $O_t(q)$ is a random variable whose distribution is $\mathbf{w}_t\in\Delta_A$. $\mathbf{W}$ denotes a $|T|\times |A|$ matrix where each row $t$ is $\mathbf{w}_t$. 
\end{definition} 

Each respondent is type $t$ with prob $p_t$ and $\sum_t p_t=1$. A type $t$ respondent generates the answer by running the oracle $O_t$. For all $a\in A$, the probability that a respondent answers $a$ will be $\sum_t p_t \mathbf{w}_t(a)$. We assume the probability is positive for all $a\in A$.


\begin{example}[A running example]\label{eg:w}
There are two types $T=\{0,1\}$. The answer space is $A=\{3,4,6\}$. $O_0$ will output `3' with probability 0.8 and `6' with probability 0.2. $O_1$ will output `4' deterministically. In this example, $\mathbf{W}=\begin{bmatrix}0.8&0&0.2\\0&1&0\end{bmatrix}$ where the first row is the distribution of $O_0$'s output and the second row is the distribution of $O_1$'s output. 
\end{example}

\paragraph{Generating predictions} We then describe how people of different thinking types predict what other people will answer. Here the prediction is not a distribution, but an answer other people may report. When a type $t$ respondent makes a prediction, she will run an oracle, which is $O_{t'}$ with probability $p_{t\rightarrow t'}$ where $\sum_{t'}p_{t\rightarrow t'}=1$. She uses the output of $O_{t'}$ as the prediction $g\in A$.

\paragraph{Combination: answer-prediction joint distribution $\mathbf{M}$} $\mathbf{M}$ denotes a $|A|\times |A|$ matrix where $M_{a,g}$ is the probability an respondent answers $a$ and predicts $g$. $\mathbf{\Lambda}$ denotes a $|T|\times |T|$ matrix where $\Lambda_{t,t'}=p_t p_{t\rightarrow t'}$ is the probability a respondent is type $t$ and predicts type $t'$. 

\begin{example}
In this example, when type 0 respondent makes a prediction, with prob 1, she will run $O_0$ again. When type 1 respondent makes a prediction, with prob 0.5, she will run $O_1$ again, with prob 0.5, she will run $O_0$. Moreover, a respondent is type 0 with prob 0.7, and type 1 with prob 0.3. Here $\mathbf{\Lambda}=\begin{bmatrix}p_0 p_{0\rightarrow 0}&p_0 p_{0\rightarrow 1}\\p_1 p_{1\rightarrow 0}&p_1 p_{1\rightarrow 1}\end{bmatrix}=\begin{bmatrix}0.7*1&0.7*0\\0.3*0.5&0.3*0.5\end{bmatrix}=\begin{bmatrix}0.7&0\\0.15&0.15\end{bmatrix}$.
\end{example}

\begin{claim}
Based on the above generating processes, $\mathbf{M}=\mathbf{W}^{\top}\mathbf{\Lambda}\mathbf{W}$.
\end{claim}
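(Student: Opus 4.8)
The plan is to compute the entry $M_{a,g}$ directly from the generative process and then recognize the resulting double sum as the $(a,g)$ entry of the matrix product $\mathbf{W}^{\top}\mathbf{\Lambda}\mathbf{W}$. Writing $W_{t,a}=\mathbf{w}_t(a)$ throughout, the whole statement is an index-contraction identity, so the work is entirely in setting up the right conditioning.

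First I would expand $M_{a,g}=\Pr[\text{answer}=a,\ \text{prediction}=g]$ by the law of total probability, conditioning on two latent variables: the respondent's own type $t$ (drawn with probability $p_t$) and the type $t'$ of the oracle she invokes when forming her prediction (drawn with probability $p_{t\rightarrow t'}$ once her type is fixed). The key structural fact is that, conditioned on $(t,t')$, the answer and the prediction are produced by two independent oracle runs: the answer by a run of $O_t$, which outputs $a$ with probability $\mathbf{w}_t(a)$, and the prediction by a run of $O_{t'}$, which outputs $g$ with probability $\mathbf{w}_{t'}(g)$. Hence the joint conditional probability factors, yielding
\begin{equation}
M_{a,g}=\sum_{t\in T}\sum_{t'\in T} p_t\,\mathbf{w}_t(a)\,p_{t\rightarrow t'}\,\mathbf{w}_{t'}(g).
\end{equation}

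Next I would regroup the factors so that the type data $p_t p_{t\rightarrow t'}$ is isolated as a single block. Using $W_{t,a}=\mathbf{w}_t(a)$ and the definition $\Lambda_{t,t'}=p_t p_{t\rightarrow t'}$, the expression becomes $M_{a,g}=\sum_{t,t'} W_{t,a}\,\Lambda_{t,t'}\,W_{t',g}$. Finally I would verify this equals $(\mathbf{W}^{\top}\mathbf{\Lambda}\mathbf{W})_{a,g}$: since $(\mathbf{W}^{\top})_{a,t}=W_{t,a}$, chaining the two matrix multiplications gives $\sum_{t'}\bigl(\sum_t W_{t,a}\Lambda_{t,t'}\bigr)W_{t',g}$, which matches the double sum term by term.

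I do not expect a genuine obstacle; the claim is essentially a bookkeeping identity. The only point requiring care is justifying the factorization in the display, i.e.\ making explicit the modelling assumptions that the answer-generating run of $O_t$ and the prediction-generating run of $O_{t'}$ are independent given $(t,t')$, and that the prediction oracle's type $t'$ is chosen independently of the realized answer $a$. Once this conditional independence is stated, everything else is a routine rearrangement of a finite double sum into the standard triple-product contraction.
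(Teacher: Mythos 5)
Your proposal is correct and follows essentially the same route as the paper's proof: condition on the respondent's type $t$ and the predicted-oracle type $t'$, factor the joint probability as $p_t\,\mathbf{w}_t(a)\,p_{t\rightarrow t'}\,\mathbf{w}_{t'}(g)$, and recognize the double sum as the $(a,g)$ entry of $\mathbf{W}^{\top}\mathbf{\Lambda}\mathbf{W}$. Your write-up is slightly more explicit about the conditional-independence assumption and the final index-matching step, but there is no substantive difference.
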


\begin{proof}
For each respondent, the probability she answers $a$ and predicts $g$ will be \[ M_{a,g}=\sum_t p_t \mathbf{w}_t(a) \sum_{t'} p_{t\rightarrow t'} \mathbf{w}_{t'}(g)=\sum_{t,t'} \mathbf{w}_t(a) p_t p_{t\rightarrow t'}  \mathbf{w}_{t'}(g). \] We sum over all possible types the respondent will be. Given she is type $t$, she runs oracle $O_t$ to generate the answer and $\mathbf{w}_t(a)$ is the probability that the answer is $a$. We sum over all possible oracles she runs to predict. Given that she runs $O_{t'}$, $\mathbf{w}_{t'}(g)$ is the probability the prediction is $g$. 
\end{proof}



\paragraph{Key assumption: upper-triangular $\mathbf{\Lambda}$} we assume that people of a less sophisticated level can never run the oracles of more sophisticated levels. A linear ordering of types $\pi:\{1,2,\cdots,|T|\}\mapsto T$ maps a ranking position to a type. For example, $\pi(1)\in T$ is the top-ranking type. 
\begin{assumption}
We assume that with a proper ordering $\pi$ of the types, $\mathbf{\Lambda}$ is an upper-triangular matrix. Formally, there exists $\pi$ such that $\forall i> j, \mathbf{\Lambda}_{\pi(i),\pi(j)}=0$. Any $\pi$ that makes $\mathbf{\Lambda}$ upper-triangular is a valid thinking hierarchy of the types.
\end{assumption}

In the running example, the valid thinking hierarchy is $\pi(1)=\text{type 1}$, $\pi(2)=\text{type 0}$. Note that the above assumption does not require $\forall i\leq j, \mathbf{\Lambda}_{\pi(i),\pi(j)}>0$. When $\mathbf{\Lambda}$ is a diagonal matrix, types cannot predict each other and are equally sophisticated, thus any ordering is a valid thinking hierarchy.  

An algorithm \textbf{finds the thinking hierarchy} when the algorithm is given $\mathbf{M}$ which is generated by latent (unknown) $\mathbf{W},\mathbf{\Lambda}$, and the algorithm will output a matrix $\mathbf{W}^*$ which equals a row-permuted $\mathbf{W}$ where the row order is a valid thinking hierarchy. Formally, there exists a valid thinking hierarchy $\pi$ such that the $i^{th}$ row of $\mathbf{W}^*$ is the $\pi(i)^{th}$ row of $\mathbf{W}$, i.e, $\mathbf{w}^*_i=\mathbf{w}_{\pi(i)}$.

\subsection{Non-negative Congruence Triangularization (NCT)}\label{sec:nct}

With the above model, inferring thinking hierarchy leads to a novel matrix decomposition problem, which is similar to the symmetric non-negative matrix factorization problem (NMF)\footnote{Symmetric NMF: $\min_{\mathbf{W}} ||\mathbf{M}-\mathbf{W}^{\top}\mathbf{W}||_F^2$}. A non-negative matrix is a matrix whose elements are non-negative. \begin{definition}[Non-negative Congruence\footnote{We use congruence here though it is not matrix congruence since $\mathbf{W}$ may not be a square matrix.} Triangularization (NCT)]\label{def:nct}
Given a non-negative matrix $\mathbf{M}$, NCT aims to find non-negative matrices $\mathbf{W}$ and non-negative upper-triangular matrix $\mathbf{\Lambda}$ such that $\mathbf{M}=\mathbf{W}^{\top}\mathbf{\Lambda}\mathbf{W}$. In a Frobenius norm based approximated version, given a set of matrices $\mathcal{W}$, NCT aims to find non-negative matrices $\mathbf{W}$ and non-negative upper-triangular matrix $\mathbf{\Lambda}$ to minimize \[\min_{\mathbf{W}\in \mathcal{W},\mathbf{\Lambda}} ||\mathbf{M}-\mathbf{W}^{\top}\mathbf{\Lambda}\mathbf{W}||_F^2\] and the minimum is defined as the lack-of-fit of $\mathbf{M}$ regarding $\mathcal{W}$\footnote{$\mathbf{M}=\mathbf{W}^{\top}\mathbf{\Lambda}\mathbf{W},\mathbf{W}\in\mathcal{W}$ has zero lack-of-fit.}. 
\end{definition}





Like NMF, it's impossible to ask for the strict uniqueness of the results. Let $P_{\mathbf{\Lambda}}$ be the set of permutation matrices such that $\mathbf{\Pi}^{\top} \mathbf{\Lambda} \mathbf{\Pi}$ is still upper-triangular. If $(\mathbf{W},\mathbf{\Lambda})$ is a solution, then $(\mathbf{\Pi}^{-1}\mathbf{D}\mathbf{W},\mathbf{\Pi}^{\top} \mathbf{D}^{-1}\mathbf{\Lambda}\mathbf{D}^{-1} \mathbf{\Pi})$ is also a solution where $\mathbf{D}$ is a diagonal matrix with positive elements and $\mathbf{\Pi}\in P_{\mathbf{\Lambda}}$. We state the uniqueness results as follows and the proof is deferred to Appendix~\ref{sec:proofs}. 


\begin{proposition}[Uniqueness]\label{prop:unique}
If $|T|\leq |A|$ and $T$ columns of $\mathbf{W}$ consist of a permuted positive diagonal matrix, NCT for $\mathbf{M}=\mathbf{W}^{\top}\mathbf{\Lambda}\mathbf{W}$ is unique in the sense that then for all $\mathbf{W}'^{\top}\mathbf{\Lambda'}\mathbf{W}'=\mathbf{W}^{\top}\mathbf{\Lambda}\mathbf{W}$, there exists a positive diagonal matrix $\mathbf{D}$ and a $|T|\times |T|$ permutation matrix $\mathbf{\Pi}\in P_{\mathbf{\Lambda}}$ such that $\mathbf{W}'=\mathbf{\Pi}^{-1}\mathbf{D}\mathbf{W}$. 
\end{proposition}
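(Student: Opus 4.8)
The plan is to reduce the statement to a change-of-basis argument and then exploit non-negativity together with the upper-triangular structure of $\mathbf{\Lambda}$. Throughout I will use that the $|T|$ anchor columns make $\mathbf{W}$ have full row rank $|T|$, and I will work in the non-degenerate regime where $\mathbf{\Lambda}$ is invertible (equivalently, being non-negative and upper-triangular, it has a strictly positive diagonal; in the model $\Lambda_{t,t}=p_t p_{t\rightarrow t}$, so this is the case $p_{t\rightarrow t}>0$). After relabelling the answers I may assume that for each type $t$ there is an answer $a_t$ with $\mathbf{W}_{t,a_t}>0$ and $\mathbf{W}_{s,a_t}=0$ for $s\neq t$.

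First I would pin down the relation between $\mathbf{W}'$ and $\mathbf{W}$. Since $\mathbf{W}$ has full row rank and $\mathbf{\Lambda}$ is invertible, $\operatorname{rank}\mathbf{M}=|T|$, and the row space of $\mathbf{M}=\mathbf{W}^\top\mathbf{\Lambda}\mathbf{W}$ is contained in the row space of $\mathbf{W}$, hence equal to it by the rank count. Applying the same reasoning to $\mathbf{M}=\mathbf{W}'^\top\mathbf{\Lambda}'\mathbf{W}'$ forces $\operatorname{rank}\mathbf{W}'=|T|$ and shows the row space of $\mathbf{W}'$ equals that of $\mathbf{W}$; therefore $\mathbf{W}'=\mathbf{C}\mathbf{W}$ for a unique invertible $|T|\times|T|$ matrix $\mathbf{C}$. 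Substituting into $\mathbf{W}^\top\mathbf{\Lambda}\mathbf{W}=\mathbf{W}^\top\mathbf{C}^\top\mathbf{\Lambda}'\mathbf{C}\mathbf{W}$ and cancelling with a left inverse of $\mathbf{W}^\top$ and a right inverse of $\mathbf{W}$ (both exist by full rank) yields $\mathbf{\Lambda}=\mathbf{C}^\top\mathbf{\Lambda}'\mathbf{C}$; in particular $\mathbf{\Lambda}'$ is invertible, so it too has a strictly positive diagonal.

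The heart of the argument is to upgrade $\mathbf{C}$ from merely invertible to a positive monomial (generalized permutation) matrix, which is exactly where NCT is more rigid than a generic congruence factorization. Non-negativity enters twice. For the anchor column $a_t$ we have $\mathbf{W}'_{:,a_t}=\mathbf{C}\,\mathbf{W}_{:,a_t}=\mathbf{W}_{t,a_t}\,\mathbf{C}_{:,t}\geq 0$, and since $\mathbf{W}_{t,a_t}>0$ this gives $\mathbf{C}_{:,t}\geq 0$ for every $t$, i.e. $\mathbf{C}\geq 0$. Next I would expand the strictly-lower-triangular entries of $\mathbf{\Lambda}=\mathbf{C}^\top\mathbf{\Lambda}'\mathbf{C}$: for $i>j$,
\[
0=\mathbf{\Lambda}_{i,j}=\sum_{k\leq l} C_{k,i}\,\Lambda'_{k,l}\,C_{l,j},
\]
a sum of non-negative terms, so every term vanishes; taking the diagonal terms $k=l$ gives $C_{k,i}\,\Lambda'_{k,k}\,C_{k,j}=0$, and $\Lambda'_{k,k}>0$ forces $C_{k,i}C_{k,j}=0$ for all $k$ and all $i\neq j$. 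Hence every row of $\mathbf{C}$ has at most one nonzero entry; invertibility then makes it exactly one, so $\mathbf{C}$ is a positive monomial matrix and can be written $\mathbf{C}=\mathbf{\Pi}^{-1}\mathbf{D}$ with $\mathbf{\Pi}$ a permutation matrix and $\mathbf{D}$ positive diagonal.

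It remains to check $\mathbf{\Pi}\in P_{\mathbf{\Lambda}}$. Substituting $\mathbf{C}=\mathbf{\Pi}^{-1}\mathbf{D}$ into $\mathbf{\Lambda}=\mathbf{C}^\top\mathbf{\Lambda}'\mathbf{C}$ gives $\mathbf{\Pi}^\top(\mathbf{D}^{-1}\mathbf{\Lambda}\mathbf{D}^{-1})\mathbf{\Pi}=\mathbf{\Lambda}'$, which is upper-triangular; since $\mathbf{D}^{-1}\mathbf{\Lambda}\mathbf{D}^{-1}$ has the same zero pattern as $\mathbf{\Lambda}$, the matrix $\mathbf{\Pi}^\top\mathbf{\Lambda}\mathbf{\Pi}$ is upper-triangular as well, so $\mathbf{\Pi}\in P_{\mathbf{\Lambda}}$ and $\mathbf{W}'=\mathbf{\Pi}^{-1}\mathbf{D}\mathbf{W}$ as claimed. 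I expect the main obstacle to be the monomial step: the reduction to $\mathbf{W}'=\mathbf{C}\mathbf{W}$ is standard linear algebra, but concluding that non-negativity together with upper-triangularity collapses $\mathbf{C}$ to a scaled permutation is what distinguishes NCT uniqueness from ordinary factorizations, and it is also where the strict positivity of the diagonal of $\mathbf{\Lambda}$ is essential---degenerate diagonal entries would reintroduce the familiar NMF non-uniqueness and would have to be excluded or handled separately.
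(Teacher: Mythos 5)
Your proof is correct, and its heart is the same as the paper's: the step showing that a non-negative $\mathbf{C}$ realizing the congruence $\mathbf{\Lambda}=\mathbf{C}^{\top}\mathbf{\Lambda}'\mathbf{C}$ between non-negative upper-triangular matrices with positive diagonals must be a positive monomial matrix is exactly the paper's Claim (the matrix there is called $\mathbf{U}$), proved by the same device of expanding a vanishing lower-triangular entry as a sum of non-negative terms and isolating the diagonal terms of $\mathbf{\Lambda}'$. The difference is in the scaffolding around this step. The paper first restricts everything to the $|T|$ anchor columns with a column-selection matrix, applies the monomial claim to the resulting square system, and then needs a separate lift-back argument --- constructing an explicit matrix $\mathbf{B}$ with $\mathbf{W}'=\mathbf{B}\mathbf{W}$ and matching it with $\mathbf{\Pi}^{-1}\mathbf{D}$ on the anchor columns --- to transfer the conclusion from those columns to all of $\mathbf{W}'$. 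You instead prove $\mathbf{W}'=\mathbf{C}\mathbf{W}$ globally at the outset by a rank/row-space argument (using full row rank of $\mathbf{W}$ and invertibility of $\mathbf{\Lambda}$), after which no lift-back is needed; this is a cleaner organization and it localizes where non-negativity of $\mathbf{W}'$ enters (only on the anchor columns, to get $\mathbf{C}\ge 0$). Finally, both arguments need the non-degeneracy you flag explicitly: the paper's Claim assumes positive diagonals for both $\mathbf{\Lambda}$ and $\mathbf{\Lambda}'$ even though the proposition statement does not (and some such hypothesis is necessary --- for $\mathbf{\Lambda}=\mathbf{0}$ the statement fails), and your treatment is slightly tighter since you assume positivity only for $\mathbf{\Lambda}$ and deduce it for $\mathbf{\Lambda}'$ from invertibility of $\mathbf{C}$.
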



When we restrict $\mathbf{W}$ to be ``\textbf{semi-orthogonal}'', we obtain a clean format of NCT without searching for optimal $\mathbf{\Lambda}$. $\mathcal{I}$ is the set of all ``semi-orthogonal'' matrices $\mathbf{W}$ where each column of $\mathbf{W}$ has and only has one non-zero element and $\mathbf{W}\mathbf{W}^{\top}=\mathbf{I}$. For example, the $\mathbf{W}$ in Example~\ref{eg:w} can be normalized to a semi-orthogonal matrix. The following lemma follows from the expansion of the Frobenius norm and we defer the proof to Appendix~\ref{sec:proofs}. 



\begin{lemma}[Semi-orthogonal: minimizing F-norm = maximizing upper-triangular's sum of the square]\label{lem:minmax}
For all set of matrices $\mathcal{W}\subset \mathcal{I}$, $\min_{\mathbf{W}\in\mathcal{W},\mathbf{\Lambda}} ||\mathbf{M}-\mathbf{W}^{\top}\mathbf{\Lambda}\mathbf{W}||_F^2$ is equivalent to solving $\max_{\mathbf{W}\in\mathcal{W}} \sum_{i\leq j} (\mathbf{W}\mathbf{M}\mathbf{W}^{\top})^2_{i,j}$ and setting $\mathbf{\Lambda}$ as $\mathrm{Up}(\mathbf{W}\mathbf{M}\mathbf{W}^{\top})$, the upper-triangular area of $\mathbf{W}\mathbf{M}\mathbf{W}^{\top}$.
\end{lemma}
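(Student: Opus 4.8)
The plan is to fix an arbitrary $\mathbf{W}\in\mathcal{W}$, solve the inner minimization over upper-triangular $\mathbf{\Lambda}$ in closed form, and then read off the remaining optimization over $\mathbf{W}$. First I would expand the Frobenius norm with respect to the trace inner product $\langle\cdot,\cdot\rangle$ as $||\mathbf{M}-\mathbf{W}^{\top}\mathbf{\Lambda}\mathbf{W}||_F^2 = ||\mathbf{M}||_F^2 - 2\langle \mathbf{M}, \mathbf{W}^{\top}\mathbf{\Lambda}\mathbf{W}\rangle + ||\mathbf{W}^{\top}\mathbf{\Lambda}\mathbf{W}||_F^2$. The cross term simplifies by the adjoint identity $\langle \mathbf{M}, \mathbf{W}^{\top}\mathbf{\Lambda}\mathbf{W}\rangle = \langle \mathbf{\Lambda}, \mathbf{W}\mathbf{M}\mathbf{W}^{\top}\rangle$, which is just moving $\mathbf{W}$ across the inner product and needs no special structure. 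The quadratic term is where the semi-orthogonality $\mathbf{W}\mathbf{W}^{\top}=\mathbf{I}$ does the real work: $||\mathbf{W}^{\top}\mathbf{\Lambda}\mathbf{W}||_F^2 = \tr(\mathbf{W}^{\top}\mathbf{\Lambda}^{\top}\mathbf{W}\mathbf{W}^{\top}\mathbf{\Lambda}\mathbf{W}) = \tr(\mathbf{\Lambda}^{\top}\mathbf{\Lambda}) = ||\mathbf{\Lambda}||_F^2$, where the inner $\mathbf{W}\mathbf{W}^{\top}$ is replaced by $\mathbf{I}$ and a cyclic rotation exposes the outer one.

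Writing $\mathbf{N} = \mathbf{W}\mathbf{M}\mathbf{W}^{\top}$, these steps reduce the objective to $||\mathbf{M}||_F^2 + \sum_{i,j}(\Lambda_{ij}^2 - 2\Lambda_{ij}N_{ij})$, a sum that decouples across entries. Because $\mathbf{\Lambda}$ is constrained to be upper-triangular, every entry with $i>j$ is forced to $0$ and contributes nothing, while each free entry $i\le j$ is an independent scalar quadratic minimized at $\Lambda_{ij}=N_{ij}$ with value $-N_{ij}^2$. Hence $\min_{\mathbf{\Lambda}}||\mathbf{M}-\mathbf{W}^{\top}\mathbf{\Lambda}\mathbf{W}||_F^2 = ||\mathbf{M}||_F^2 - \sum_{i\le j}N_{ij}^2$, attained at $\mathbf{\Lambda}=\mathrm{Up}(\mathbf{N})$.

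Finally, since $||\mathbf{M}||_F^2$ is a constant independent of $\mathbf{W}$, minimizing the left-hand side over $\mathbf{W}\in\mathcal{W}$ is exactly maximizing $\sum_{i\le j}(\mathbf{W}\mathbf{M}\mathbf{W}^{\top})_{ij}^2$, which is the claimed equivalence, and the minimizing $\mathbf{\Lambda}$ is $\mathrm{Up}(\mathbf{W}\mathbf{M}\mathbf{W}^{\top})$. I would close by checking the one point where the non-negativity requirement on $\mathbf{\Lambda}$ could interfere: because $\mathbf{M}$ is non-negative and every $\mathbf{W}\in\mathcal{W}$ is non-negative, $\mathbf{N}=\mathbf{W}\mathbf{M}\mathbf{W}^{\top}$ is entrywise non-negative, so the unconstrained optimizer $\mathrm{Up}(\mathbf{N})$ already lies in the non-negative cone and the constraint is never binding.

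The only genuine work is the two trace/adjoint manipulations — getting transposes and cyclic rotations right so that $\mathbf{W}\mathbf{W}^{\top}=\mathbf{I}$ can be invoked — rather than any real difficulty; once the objective is in the separable form $\sum_{i,j}(\Lambda_{ij}^2-2\Lambda_{ij}N_{ij})$, the upper-triangular restriction and the per-entry completion of the square are immediate. The conceptual point worth flagging is that semi-orthogonality is precisely what decouples the problem: without $\mathbf{W}\mathbf{W}^{\top}=\mathbf{I}$ the quadratic term would not collapse to $||\mathbf{\Lambda}||_F^2$ and the clean entrywise minimization would fail.
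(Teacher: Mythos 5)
Your proof is correct and follows essentially the same route as the paper's: expand the Frobenius norm, use $\mathbf{W}\mathbf{W}^{\top}=\mathbf{I}$ together with cyclic trace identities to collapse the quadratic term to $||\mathbf{\Lambda}||_F^2$, and identify the optimal $\mathbf{\Lambda}$ as $\mathrm{Up}(\mathbf{W}\mathbf{M}\mathbf{W}^{\top})$ --- the paper completes the square at the matrix level via $||\mathbf{\Lambda}-\mathbf{W}\mathbf{M}\mathbf{W}^{\top}||_F^2$ while you decouple entrywise, but the computation is the same. Your closing observation that non-negativity of $\mathbf{M}$ and $\mathbf{W}$ makes the non-negativity constraint on $\mathbf{\Lambda}$ non-binding is a point the paper's proof silently omits, and it is worth keeping.
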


\subsection{Inferring the thinking hierarchy with answer-prediction joint distribution $\mathbf{M}$}\label{sec:infer}

Given $\mathbf{M}$, inferring the thinking hierarchy is equivalent to solving NCT in general. Though we do not have $\mathbf{M}$, later we will show a proxy for $\mathbf{M}$. For simplicity of practical use, we introduce two simple ranking algorithms by employing Lemma~\ref{lem:minmax}. The ranking algorithm takes $\mathbf{M}$ as input and outputs a linear ordering of answers $\pi:\{1,2,\cdots,|A|\}\mapsto A$ that maps a ranking position to an answer. 



\paragraph{Answer-Ranking Algorithm (Default) $AR(\mathbf{M})$}  The algorithm computes \[\mathbf{\Pi}^*\leftarrow \arg\max_{\mathbf{\Pi}\in\mathcal{P}} \sum_{i\leq j} (\mathbf{\Pi} \mathbf{M}\mathbf{\Pi}^{\top})^2_{i,j} \]
where $\mathcal{P}$ is the set of all $|A|\times |A|$  permutation matrices. There is a one to one mapping between each permutation matrix $\mathbf{\Pi}$ and a linear ordering $\pi$: $\Pi_{i,\pi(i)}=1,\forall i$. Therefore, the optimal $\mathbf{\Pi}^*$ leads to an optimal rank over answers directly and the default algorithm can be also represented as \[\pi^*\leftarrow \arg\max_{\pi} \sum_{i\leq j} M^2_{\pi(i),\pi(j)}.\] 

To find the optimal rank, we use a dynamic programming based algorithm (Appendix~\ref{sec:alg}) which takes $O(2^{|A|}|A|^2)$. In practice, $|A|$ is usually at most 7 or 8. In our empirical study, the default algorithm takes 91 milliseconds to finish the computation of all 152 questions.

The default algorithm implicitly assumes $|T|=|A|$ and all oracles are deterministic. To allow $|T|<|A|$ and non-deterministic oracles, we introduce a variant that generalizes $\mathcal{P}$ to a subset of semi-orthogonal matrices $\mathcal{I}$. Every $|T|\times |A|$ semi-orthogonal $\mathbf{W}$ indicates a hard clustering. Each cluster $t\in T$ contains all answers $a$ such that $W_{t,a}>0$. For example, the $\mathbf{W}$ in Example~\ref{eg:w} can be normalized to a semi-orthogonal matrix and indicates a hard clustering $\{4\},\{6,3\}$. Therefore, the variant algorithm will partition the answers into multiple clusters and assign a hierarchy to the clusters.

\paragraph{Answer-Ranking Algorithm (Variant)} $AR^{+}(\mathbf{M},\mathcal{W})$ The algorithm computes \[\mathbf{W}^*\leftarrow \arg\max_{\mathbf{W}\in\mathcal{I}} \sum_{i\leq j} (\mathbf{W} \mathbf{M}\mathbf{W}^{\top})^2_{i,j}.\] where $\mathcal{W}\subset\mathcal{I}$. $\mathbf{W}^*$ is normalized such that every row sums to 1. This algorithm does not restrict $|T|=|A|$ and learns the optimal $|T|$.

\paragraph{$\mathbf{W}^*$ $\Rightarrow$ Answer rank}   The output $\mathbf{W}^*$ indicates a hard clustering of all answers. We rank all answer as follows: for any $i<j$, the answers in cluster $i$ has a higher rank \footnote{Rank 1 answer is in the highest position.} than the answers in cluster $j$. For all $i$, for any two answers $a,a'$ in the same cluster $i$, $a$ is ranked higher than $a'$ if $W^*_{i,a}>W^*_{i,a'}$. 

\paragraph{Theoretical justification} 
When $\mathbf{M}$ perfectly fits the model with the restriction that the latent $\mathbf{W}$ is a permutation or a hard clustering, we show that our algorithm finds the thinking hierarchy. Otherwise, our algorithm finds the ``closest'' solution measured by the Frobenius norm. 

\begin{theorem}\label{thm}
When there exists $\mathbf{\Pi}_0\in\mathcal{P}$ and non-negative upper-triangular matrix $\mathbf{\Lambda}_0$ such that $\mathbf{M}=\mathbf{\Pi}_0^{\top}\mathbf{\Lambda}_0\mathbf{\Pi}_0$, $AR(\mathbf{M})$ finds the thinking hierarchy \footnote{See definition in the last paragraph in Section~\ref{sec:web}.}. In general, $AR(\mathbf{M})$ will output $\mathbf{\Pi}^*$ where $\mathbf{\Pi}^*,\mathbf{\Lambda}^*=\mathrm{Up}(\mathbf{\Pi}^*\mathbf{M}\mathbf{\Pi}^{*\top})$ is a solution  to  $\argmin_{\mathbf{\Pi}\in\mathcal{P},\mathbf{\Lambda}} ||\mathbf{M}-\mathbf{\Pi}^{\top}\mathbf{\Lambda}\mathbf{\Pi}||_F^2$. The above statement is still valid by replacing $\mathcal{P}$ by $\mathcal{W}\subset\mathcal{I}$ and $AR(\mathbf{M})$ by $AR^{+}(\mathbf{M},\mathcal{W})$.

\end{theorem}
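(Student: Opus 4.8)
The plan is to derive both assertions from Lemma~\ref{lem:minmax}, treating the default algorithm as the special case $\mathcal{W}=\mathcal{P}$ of the semi-orthogonal framework and the variant as a general $\mathcal{W}\subset\mathcal{I}$. First I would observe that every permutation matrix is semi-orthogonal: each column of $\mathbf{\Pi}$ has exactly one nonzero entry and $\mathbf{\Pi}\mathbf{\Pi}^{\top}=\mathbf{I}$, so $\mathcal{P}\subset\mathcal{I}$. Hence Lemma~\ref{lem:minmax} applies verbatim with $\mathcal{W}=\mathcal{P}$, and it states precisely that $\arg\max_{\mathbf{\Pi}\in\mathcal{P}}\sum_{i\le j}(\mathbf{\Pi}\mathbf{M}\mathbf{\Pi}^{\top})^2_{i,j}$ — the quantity $AR(\mathbf{M})$ maximizes — coincides with a minimizer of $\|\mathbf{M}-\mathbf{\Pi}^{\top}\mathbf{\Lambda}\mathbf{\Pi}\|_F^2$, the minimizing $\mathbf{\Lambda}$ being $\mathrm{Up}(\mathbf{\Pi}^*\mathbf{M}\mathbf{\Pi}^{*\top})$. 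This is exactly the second (``in general'') claim; the variant version with $AR^{+}(\mathbf{M},\mathcal{W})$ is the same statement for arbitrary $\mathcal{W}\subset\mathcal{I}$, which Lemma~\ref{lem:minmax} already covers.

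For the exact case I would argue that the hypothesis exhibits a feasible point of zero lack-of-fit and then transport this to the algorithm's output through the equivalence above. Concretely, when $\mathbf{M}=\mathbf{\Pi}_0^{\top}\mathbf{\Lambda}_0\mathbf{\Pi}_0$ with $\mathbf{\Lambda}_0$ upper-triangular, conjugating gives $\mathbf{\Pi}_0\mathbf{M}\mathbf{\Pi}_0^{\top}=\mathbf{\Lambda}_0$, so the choice $\mathbf{\Lambda}=\mathbf{\Lambda}_0$ yields $\|\mathbf{M}-\mathbf{\Pi}_0^{\top}\mathbf{\Lambda}_0\mathbf{\Pi}_0\|_F^2=0$. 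Thus the minimum lack-of-fit is $0$, and by the equivalence the maximizer $\mathbf{\Pi}^*$ returned by $AR(\mathbf{M})$ also attains lack-of-fit $0$ with $\mathbf{\Lambda}^*=\mathrm{Up}(\mathbf{\Pi}^*\mathbf{M}\mathbf{\Pi}^{*\top})$. Zero residual forces the strictly-lower-triangular part of $\mathbf{\Pi}^*\mathbf{M}\mathbf{\Pi}^{*\top}$ to vanish, so $\mathbf{\Lambda}^*=\mathbf{\Pi}^*\mathbf{M}\mathbf{\Pi}^{*\top}$ is upper-triangular; it is non-negative because conjugation by a permutation only reshuffles the (non-negative) entries of $\mathbf{M}$. (An equivalent intuition for the permutation case is energy conservation: $\|\mathbf{\Pi}\mathbf{M}\mathbf{\Pi}^{\top}\|_F^2=\|\mathbf{M}\|_F^2$ is constant, so maximizing the upper triangle is the same as draining the lower triangle.)

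It then remains to read the thinking hierarchy off $\mathbf{\Pi}^*$. Writing $\mathbf{Q}=\mathbf{\Pi}^*\mathbf{\Pi}_0^{\top}$ and using $\mathbf{M}=\mathbf{\Pi}_0^{\top}\mathbf{\Lambda}_0\mathbf{\Pi}_0$ gives $\mathbf{Q}\mathbf{\Lambda}_0\mathbf{Q}^{\top}=\mathbf{\Lambda}^*$, i.e.\ the permutation $\mathbf{Q}$ carries the upper-triangular $\mathbf{\Lambda}_0$ to another upper-triangular matrix. Since the latent oracle matrix here is $\mathbf{W}=\mathbf{\Pi}_0$, a bijection between types and answers, the answer order $\pi^*$ induced by $\mathbf{\Pi}^*$ corresponds to the type order $\sigma^*=\pi_0^{-1}\circ\pi^*$, and the identity $\Lambda_{0,\sigma^*(i),\sigma^*(j)}=M_{\pi^*(i),\pi^*(j)}=0$ for $i>j$ says exactly that $\sigma^*$ makes $\mathbf{\Lambda}_0$ upper-triangular, hence is a valid thinking hierarchy. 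Finally $\mathbf{W}^*=\mathbf{\Pi}^*$ is $\mathbf{W}=\mathbf{\Pi}_0$ with rows permuted by $\sigma^*$, which is precisely the definition of finding the thinking hierarchy.

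The hard part will be this last translation step, and it is more delicate for the variant. There the same chain produces a zero-residual semi-orthogonal decomposition $\mathbf{M}=\mathbf{W}^{*\top}\mathbf{\Lambda}^*\mathbf{W}^*$ with $\mathbf{\Lambda}^*$ upper-triangular, but to conclude that $\mathbf{W}^*$ is a genuine row-permutation of the latent $\mathbf{W}_0$ — and not some other clustering of equal lack-of-fit — I would invoke the uniqueness result, Proposition~\ref{prop:unique}: under its hypothesis that $T$ columns of $\mathbf{W}_0$ form a permuted positive diagonal, any zero-residual solution equals $\mathbf{\Pi}^{-1}\mathbf{D}\mathbf{W}_0$ for a positive diagonal $\mathbf{D}$ and a hierarchy-preserving $\mathbf{\Pi}\in P_{\mathbf{\Lambda}_0}$; normalizing each row of $\mathbf{W}^*$ to sum to $1$ cancels $\mathbf{D}$, leaving $\mathbf{W}^*$ as $\mathbf{W}_0$ with rows reordered by a valid hierarchy. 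Care is also needed to cover all maximizers at once rather than assuming $\mathbf{\Pi}^*=\mathbf{\Pi}_0$, and the $\mathbf{Q}$-conjugation argument above does this uniformly.
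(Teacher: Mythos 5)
Your proposal is correct and follows essentially the same route as the paper's proof: observe $\mathcal{P}\subset\mathcal{I}$ so the permutation case is a special case of the semi-orthogonal one, read the ``in general'' claim directly off Lemma~\ref{lem:minmax}, note that the exact hypothesis forces the minimizer to have zero lack-of-fit, and invoke Proposition~\ref{prop:unique} (with row normalization cancelling $\mathbf{D}$) to conclude the output is the latent $\mathbf{W}$ reordered by a valid hierarchy. Your explicit $\mathbf{Q}=\mathbf{\Pi}^*\mathbf{\Pi}_0^{\top}$ conjugation argument for the permutation case is just an elementary unrolling of what the paper obtains from uniqueness, not a different method.
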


Proposition~\ref{prop:unique} and Lemma~\ref{lem:minmax} almost directly imply the above theorem. We defer the formal proof to Appendix~\ref{sec:proofs}. 

\subsection{A proxy for answer-prediction joint distribution $\mathbf{M}$}\label{sec:proxy}

In practice, we do not have perfect $\mathbf{M}$. We use the following \emph{open-response} paradigm to obtain a proxy for $\mathbf{M}$. \paragraph{Answer-prediction paradigm} the respondents are asked \textit{Q1: What's your answer? Answer:\underline{\qquad} Q2: What do you think other people will answer:\underline{\qquad}}. 


In the circle example, the possible feedback can be ``answer: 4; prediction: 3'', ``answer: 3; prediction: 6,9,1''... We collect all answers provided by the respondents \footnote{In practice, we set a threshold $\theta$ and collect answers which are provided by at least $\theta$ fraction of the respondents. We allow multiple predictions and also allow people to answer ``I do not know''. See Section 3 for more details.} and denote the set of them by $A$. In the circle example, $A=\{1,2,3,4,6,9\}$. We also allow respondents to provide no prediction or multiple predictions.

\paragraph{Answer-prediction matrix} We aggregate the feedback and visualize it by an Answer-Prediction matrix. The Answer-Prediction matrix $\mathbf{A}$ is a $|A|\times |A|$ square matrix where $|A|$ is the number of distinct answers provided by the respondents. Each entry $A_{a,g},a,g\in A$ is the number of respondents that answer ``a'' and predict ``g''.

We will show that with proper assumptions, the answer-prediction matrix $\mathbf{A}$'s expectation is proportional to $\mathbf{M}$. First, for ease of analysis, we assume that each respondent's predictions are i.i.d. samples\footnote{This may not be a very good assumption since i.i.d. samples can repeat but respondents usually do not repeat their predictions. If we do not want this assumption, we can choose to only use the first prediction from each respondent (if there exists) to construct the answer-prediction matrix.}. Second, since we allow people to optionally provide predictions, we need to additionally assume that the number of predictions each respondent is willing to provide is independent of her type and answer. We state the formal result as follows and the proof is deferred to Appendix~\ref{sec:proofs}. 

\begin{proposition}\label{prop:expect}
When each respondent's predictions are i.i.d. samples, and the number of predictions she gives is independent of her type and answer, the answer-prediction matrix $\mathbf{A}$'s expectation is proportional to $\mathbf{M}$.
\end{proposition}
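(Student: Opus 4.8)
The plan is to compute the expectation of each entry $A_{a,g}$ by decomposing the randomness into independent layers: which type each respondent is, what answer her oracle produces, how many predictions she chooses to give, and what those predictions are. Fix the total number of respondents $N$. For a single respondent, let $K$ be the random number of predictions she provides. Each entry $A_{a,g}$ counts, across all respondents, the number of (respondent, prediction-slot) pairs where the respondent answered $a$ and that particular prediction slot holds $g$. By linearity of expectation, $\mathbb{E}[A_{a,g}] = N \cdot \mathbb{E}[\text{number of prediction slots of a single respondent that are } (a,g)]$, so it suffices to analyze one respondent.

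For that single respondent, I would condition on her type $t$, her answer, and the number of predictions $K$. Given type $t$, the answer is $a$ with probability $\mathbf{w}_t(a)$ (by the oracle definition). Given that she is type $t$ and gives $K$ predictions, the i.i.d. assumption says each prediction slot independently equals $g$ with probability $\sum_{t'} p_{t\rightarrow t'} \mathbf{w}_{t'}(g)$, which is exactly the per-prediction marginal used in the definition of $\mathbf{M}$. The key structural point is the conditional independence: the answer-generating oracle $O_t$ and the prediction-generating process are run separately, and the prediction's distribution depends on the type $t$ but not on the realized answer. Hence, conditioned on type $t$ and on $K=k$, the expected number of slots equal to $(a,g)$ is $k \cdot \mathbf{w}_t(a)\left(\sum_{t'} p_{t\rightarrow t'}\mathbf{w}_{t'}(g)\right)$.

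Now I would invoke the second assumption — that $K$ is independent of the respondent's type and answer — to factor out $\mathbb{E}[K]$. Summing over types with the type prior $p_t$ gives
\[
\mathbb{E}[A_{a,g}] = N\,\mathbb{E}[K] \sum_{t} p_t\, \mathbf{w}_t(a) \sum_{t'} p_{t\rightarrow t'}\mathbf{w}_{t'}(g) = N\,\mathbb{E}[K]\sum_{t,t'} \mathbf{w}_t(a)\,p_t p_{t\rightarrow t'}\,\mathbf{w}_{t'}(g).
\]
By the Claim establishing $\mathbf{M}=\mathbf{W}^\top\mathbf{\Lambda}\mathbf{W}$, the double sum is precisely $M_{a,g}$. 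Therefore $\mathbb{E}[\mathbf{A}] = N\,\mathbb{E}[K]\,\mathbf{M}$, and since $N\,\mathbb{E}[K]$ is a scalar constant independent of $(a,g)$, the expected answer-prediction matrix is proportional to $\mathbf{M}$.

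The main obstacle, and the place where the two assumptions do real work, is justifying the factorization cleanly rather than the arithmetic itself. Specifically, if $K$ were correlated with the type or answer, the sum over $t$ would carry a type-dependent weight $\mathbb{E}[K \mid t]$ inside, preventing $\mathbb{E}[K]$ from pulling out as a global scalar and breaking proportionality; likewise, if the prediction slots were not i.i.d.\ (or not independent of the realized answer), the per-slot probability would not equal the $\mathbf{M}$-generating marginal and the identification with $M_{a,g}$ would fail. So the delicate step is verifying that conditioning on $(t, a, K)$ leaves each prediction slot distributed exactly as the marginal prediction law of a type-$t$ respondent, which is where I would be most careful to state the independence structure explicitly before collapsing the sums.
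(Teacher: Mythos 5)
Your proof is correct and follows essentially the same route as the paper's: both compute $\mathbb{E}[A_{a,g}]$ entry-wise by linearity of expectation, use the two independence assumptions to factor the expected prediction count (the paper's $\sum_i D(i)\, i$, your $\mathbb{E}[K]$) out as a global scalar, and identify the remaining double sum $\sum_{t,t'} p_t\,\mathbf{w}_t(a)\, p_{t\rightarrow t'}\,\mathbf{w}_{t'}(g)$ with $M_{a,g}$. Your write-up is merely more explicit about the conditioning structure than the paper's one-line computation, which is a presentational difference, not a mathematical one.
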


\section{Studies}

We conduct four studies, study 1 (35 math problems),  study 2 (30 Go problems), study 3 (44 general knowledge questions), and study 4 (43 Chinese character pronunciation questions).

\paragraph{Data collection} All studies are performed by online questionnaires. We recruit the respondents by an online announcement\footnote{Many are students from top universities in China. See Appendix~\ref{sec:data} for more details.} or from an online platform that is similar to Amazon Mechanical Turk. We get respondents' consent for using and sharing their data for research. Respondents are asked not to search for the answers to the questions or communicate with other people. We allow the respondents to answer ``I do not know'' for all questions. Except for Go problems, all questionnaires use flat payment. We illustrate the data collection process in detail in Appendix~\ref{sec:data}. We allow respondents to participate in more than one study because our algorithms analyze each question separately and independently.

\paragraph{Data processing} We merge answers which are the same, like `0.5' and `50\%'. We omit the answers that are reported by less than ($\leq$) 3\% of respondents or one person. The remaining answers, excluding ``I do not know'', form the answer set $A$ whose size is $|A|$. We then construct the Answer-Prediction matrix and perform our algorithms. Pseudo-codes are attached in Appendix~\ref{sec:alg}. Our algorithms do not require any prior or the respondents' expertise levels.



\subsection{Results}~\label{sec:results}

\begin{figure*}\centering
  \subfigure[Accuracy of algorithms]{\includegraphics[height=.3\textwidth]{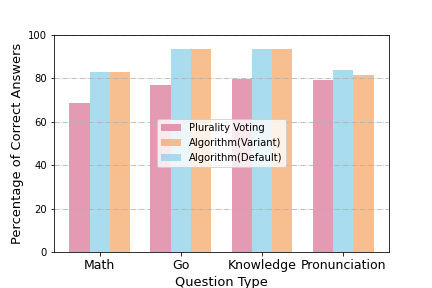}\label{fig:correction}}
  \subfigure[Empirical distribution of lack-of-fit]{\includegraphics[height=.3\textwidth]{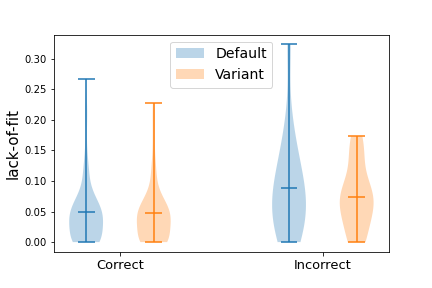}\label{fig:hierarchy-index-tot}}
  \caption{The results of our experiment} 
  \label{fig:expr_result}
\end{figure*}

\begin{table}
\begin{center}
\begin{tabular}{ |c|c|c|c| c|} 
 \hline
 Type & Total & Our algorithm(Default) &Our algorithm(Variant) & Plurality voting \\
 \hline
Math & 35 & \textbf{29}&\textbf{29} & 24\\ 
Go & 30 & \textbf{28}&\textbf{28} & 23\\
General knowledge & 44 & \textbf{41}&\textbf{41} & 35\\ 
Chinese character & 43 & \textbf{36}&35 & 34\\

 \hline
\end{tabular}
\end{center}\caption{The number of questions our algorithms/baseline are correct.}
\end{table}\label{table:compare}

\begin{figure}[!ht]
\centering
\begin{minipage}[b]{0.47\textwidth}
\subfigure[the Monty Hall problem: you can select one closed door of three. A prize, a car, is behind one of the doors. The other two doors hide goats. After you have made your choice, Monty Hall will open one of the remaining doors and show that it does not contain the prize. He then asks you if you would like to switch your choice to the other unopened door. What is the probability to get the prize if you switch?]{
\raisebox{0.2\height}{\includegraphics[width=0.4\textwidth]{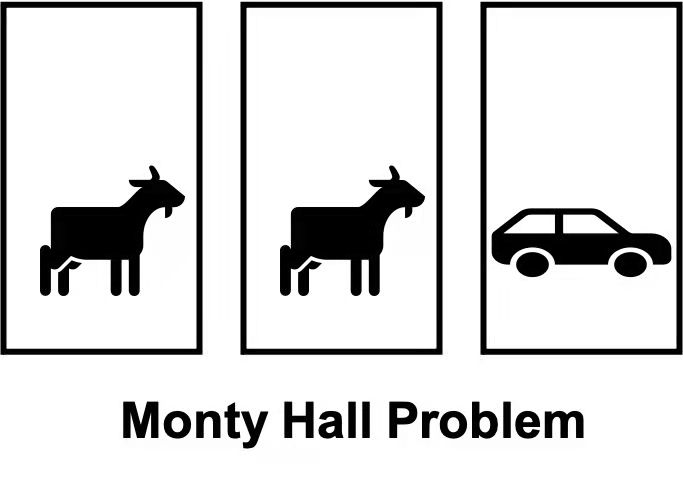}}
\includegraphics[width=0.47\textwidth]{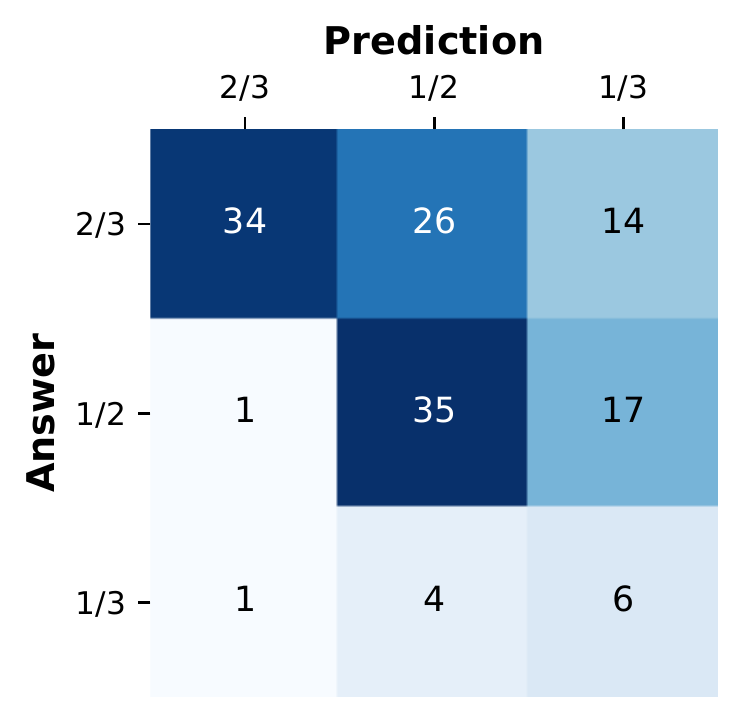}
}
\end{minipage}
\begin{minipage}[b]{0.47\textwidth}
\subfigure[the Taxicab problem: 85\% of taxis in this city are green, the others are blue. A witness sees a blue taxi. She is usually correct with probability 80\%. What is the probability that the taxi saw by the witness is blue?]{
\includegraphics[width=0.4\textwidth]{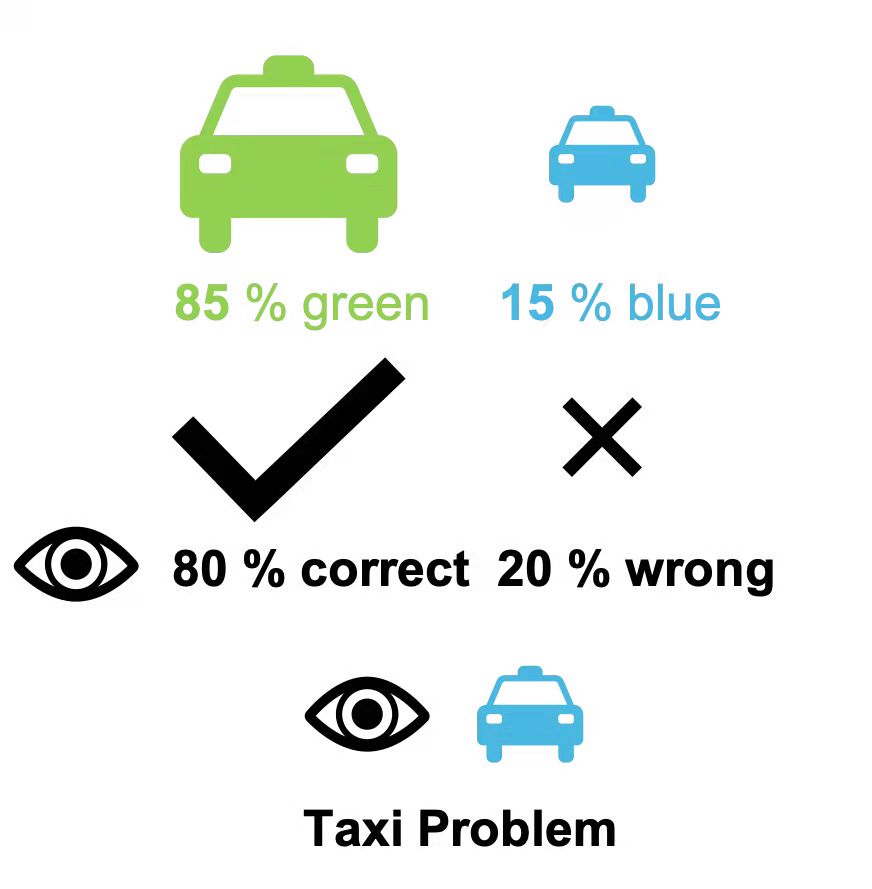}
\includegraphics[width=0.47\textwidth]{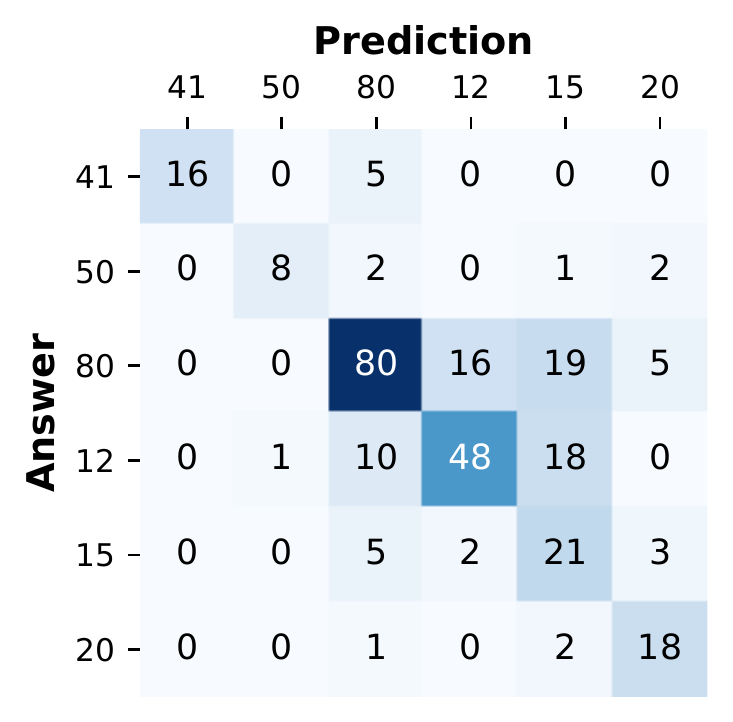}
}
\end{minipage}
\begin{minipage}[b]{0.47\textwidth}
\subfigure[Pick a move for black such that they can be alive.]{
\raisebox{0.55\height}{\includegraphics[width=0.4\textwidth]{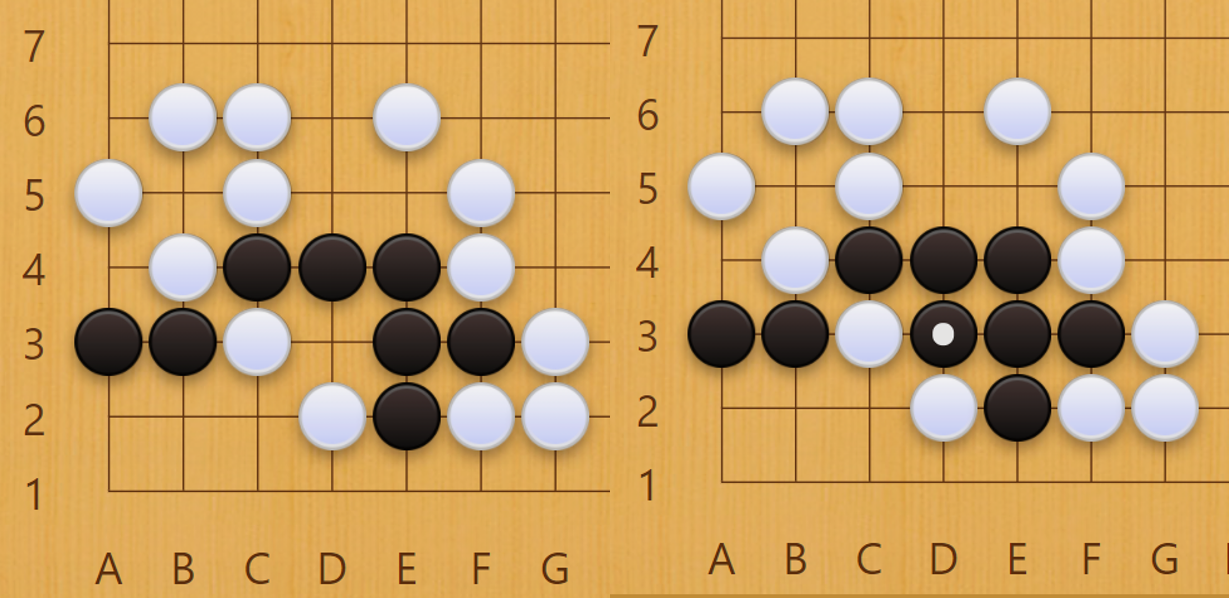}}
\includegraphics[width=0.47\textwidth]{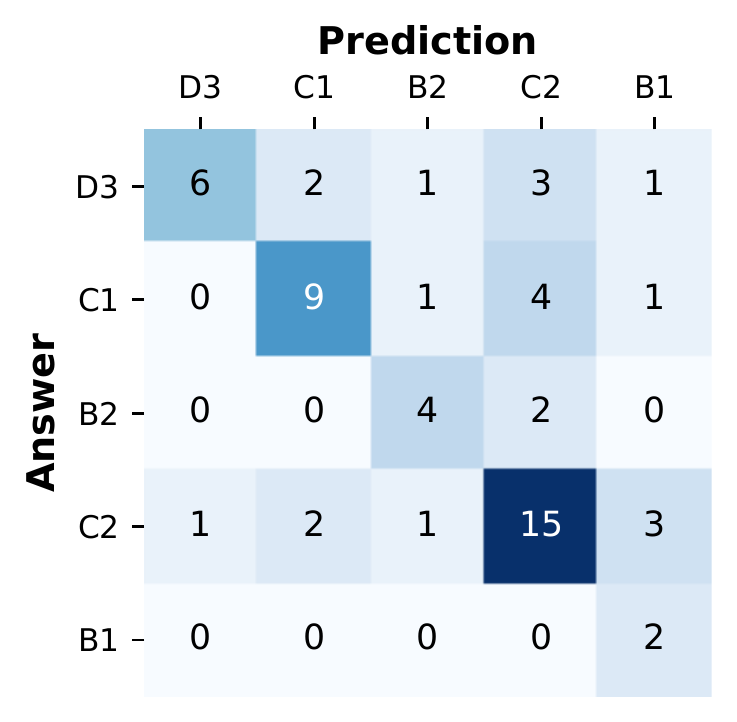}
}
\end{minipage}
\begin{minipage}[b]{0.47\textwidth}
\subfigure[Pick a move for black such that they can be alive by ko.]{
\includegraphics[width=0.4\textwidth]{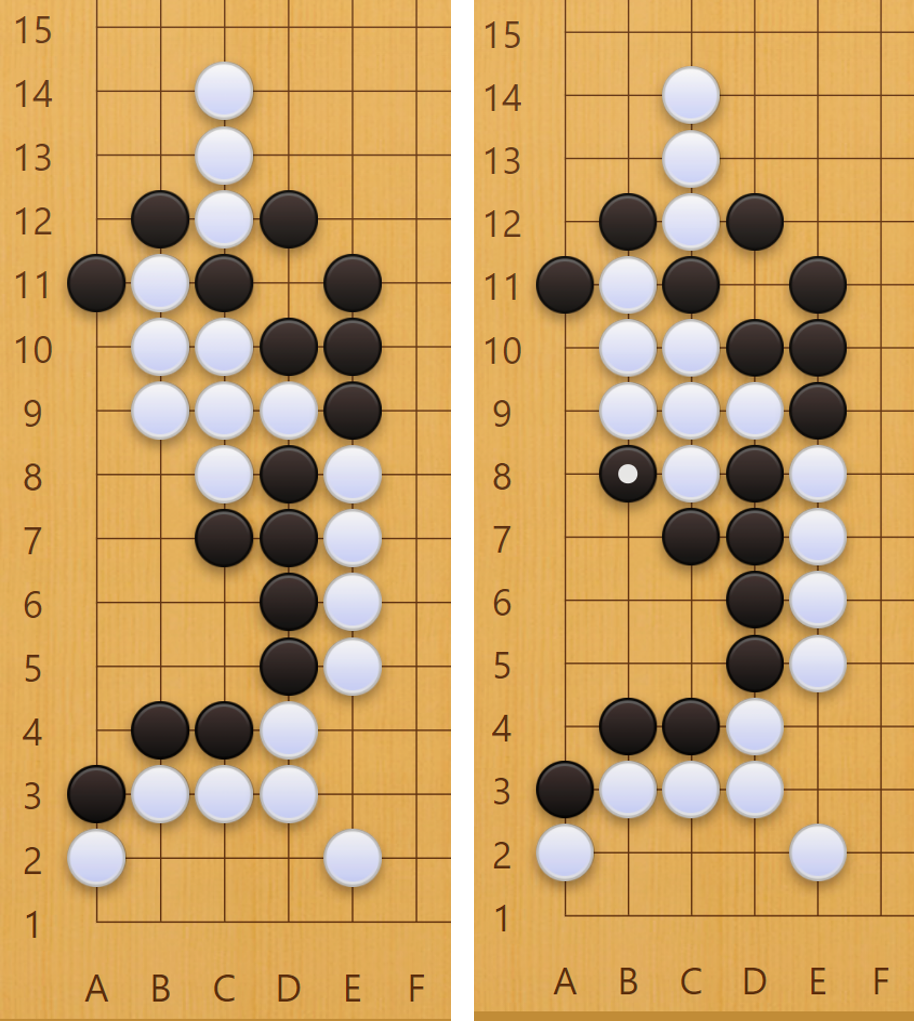}
\includegraphics[width=0.47\textwidth]{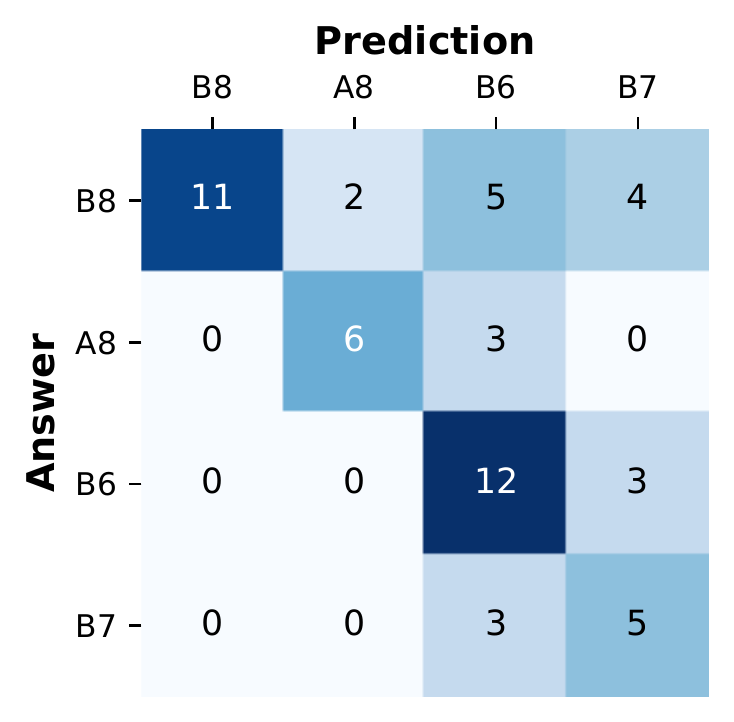}
}
\end{minipage}
\begin{minipage}[b]{0.47\textwidth}
\subfigure[the boundary question: what river forms the boundary between North Korea and China?]{
\includegraphics[align=c,width=0.37\textwidth]{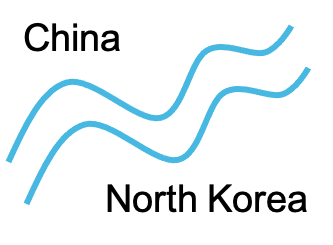}
\includegraphics[align=c,width=0.53\textwidth]{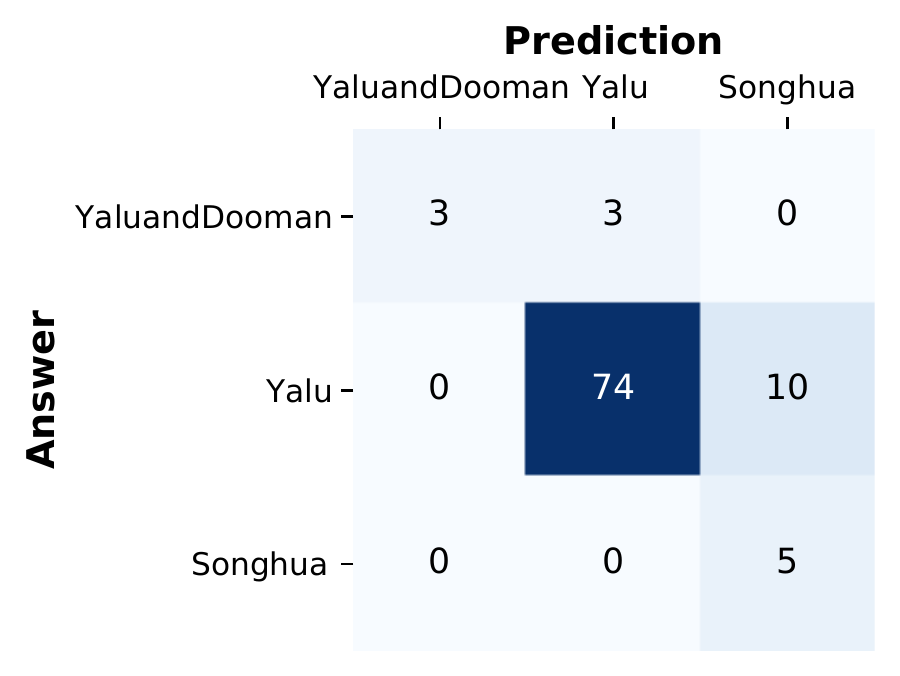}
}
\end{minipage}
\begin{minipage}[b]{0.47\textwidth}

\subfigure[the Middle Age New Year question: when was the new year in middle age? ]{
\includegraphics[align=c,width=0.40\textwidth]{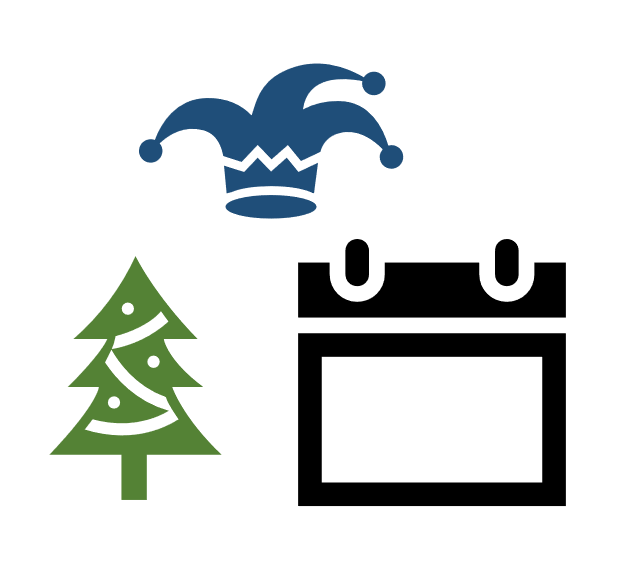}
\includegraphics[align=c,width=0.47\textwidth]{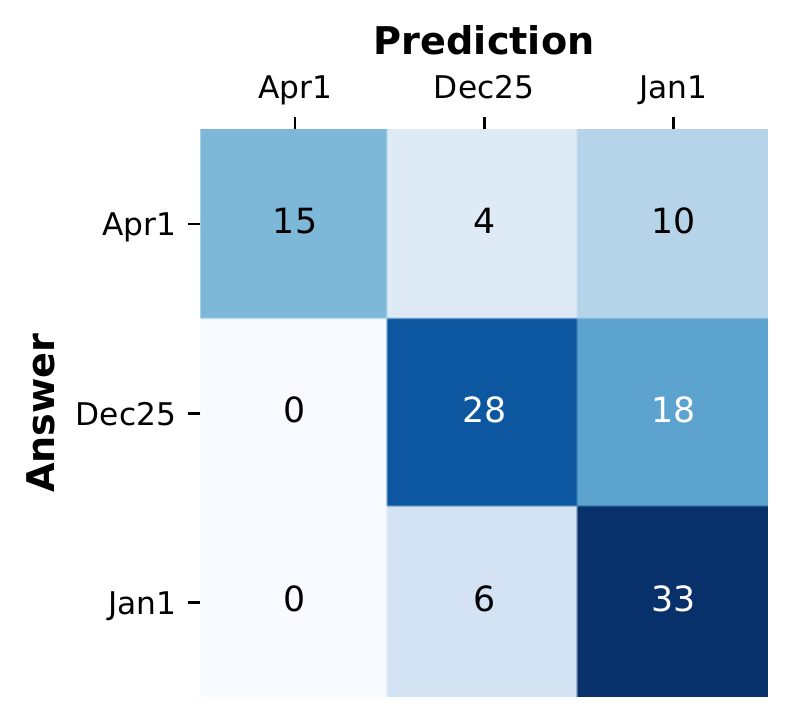}
}
\end{minipage}
\begin{minipage}[b]{0.47\textwidth}

\subfigure[the pronunciation of \begin{CJK*}{UTF8}{gbsn}睢\end{CJK*}]{
\includegraphics[width=0.4\textwidth]{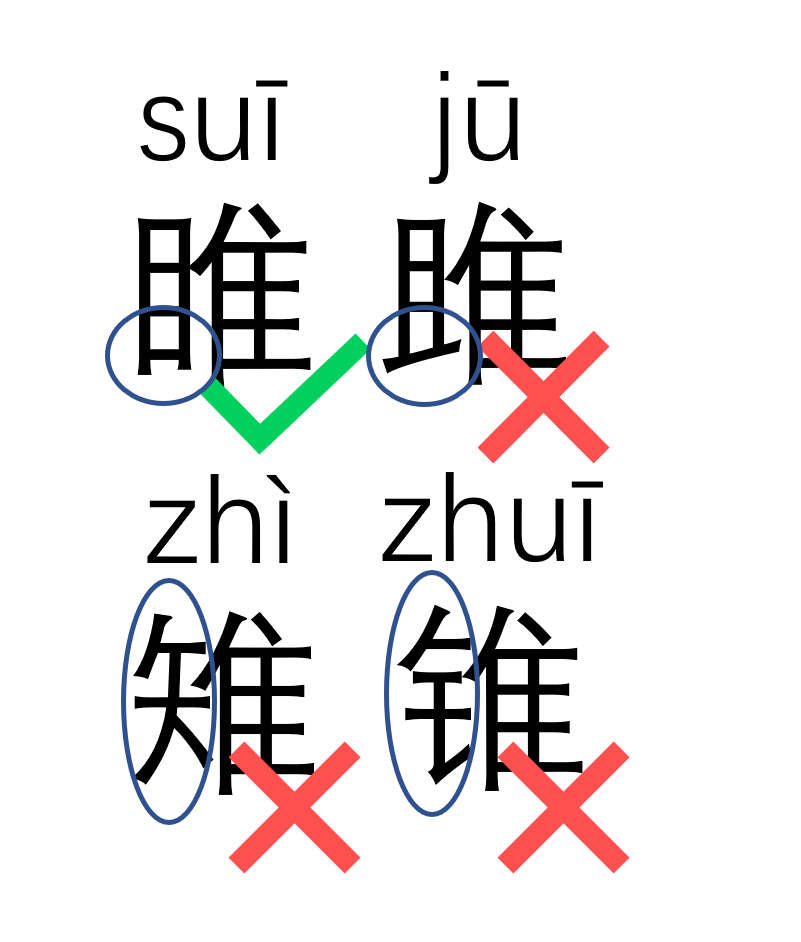}
\includegraphics[width=0.47\textwidth]{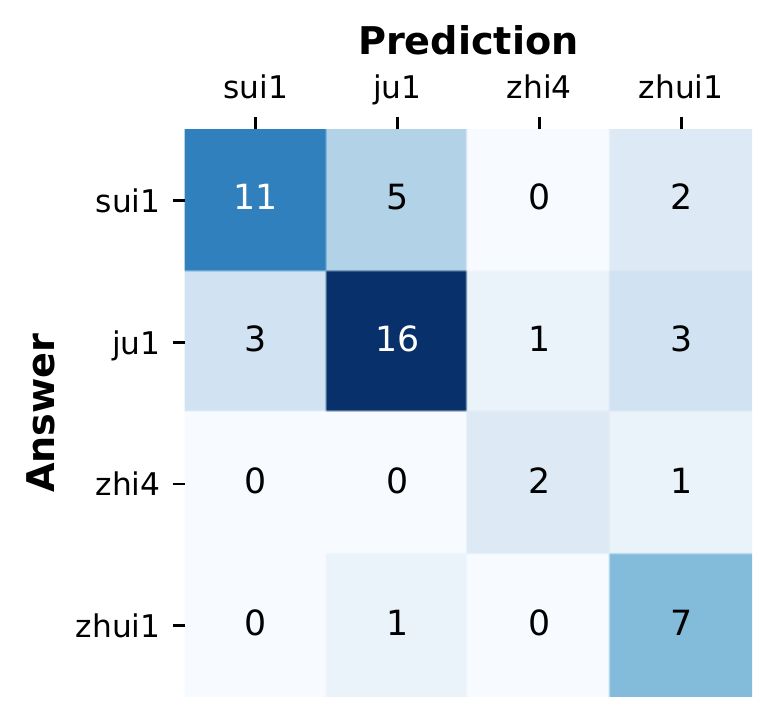}
}
\end{minipage}
\begin{minipage}[b]{0.47\textwidth}
\subfigure[the pronunciation of \begin{CJK*}{UTF8}{gbsn}滂\end{CJK*}]{
\centering
\raisebox{0.35\height}{\includegraphics[width=0.4\textwidth]{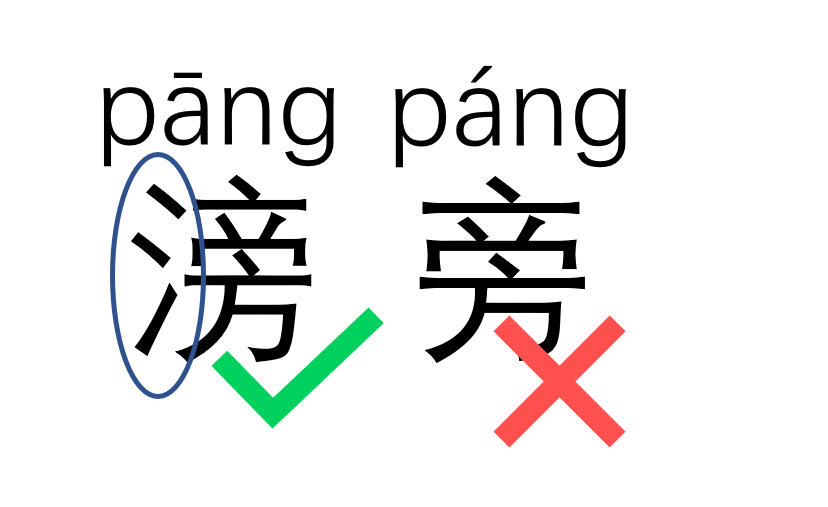}}
\includegraphics[width=0.47\textwidth]{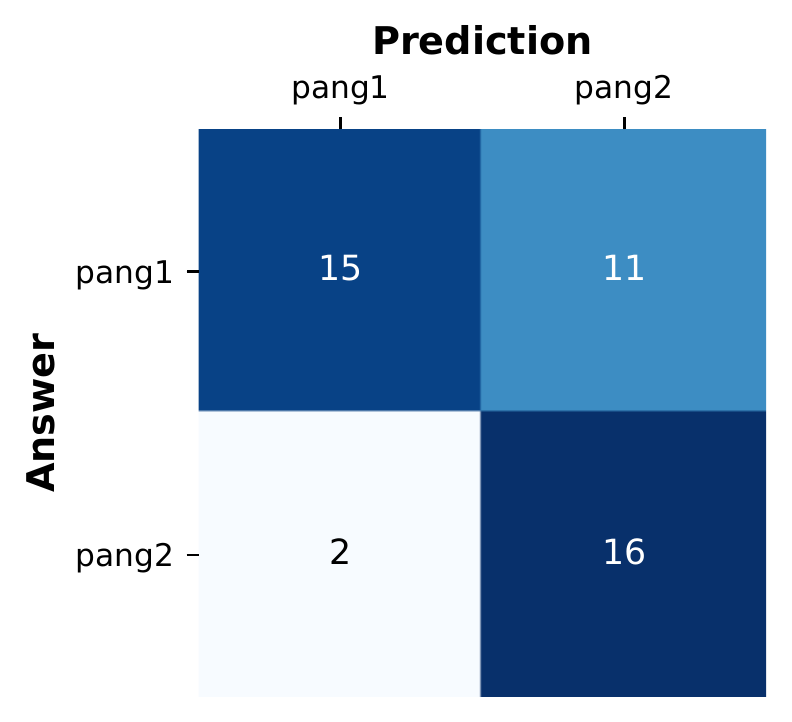}
}
\end{minipage}
\caption{The ranked answer-prediction matrices}
\label{fig:study}
\end{figure}

%

We compare our approach to the baseline, the plurality voting, regarding the accuracy of the top-ranking answers. Both of the algorithms beat plurality voting for all studies and the default is slightly better. Among all 152 questions, in 138 questions, the variant algorithm outputs the same hierarchy as the default algorithm. For other questions, the top-ranking type of the variant algorithm may contain more than one answer. The top-ranking answer is the answer supported by more people among all answers in the top-ranking type. In one question the variant is wrong but the default is correct, the variant algorithm assigns both the correct answer and the incorrect plurality answer to the top-ranking type, thus outputting the incorrect answer as the top-ranking answer. 

We also compute the lack-of-fit index (Definition~\ref{def:nct}) of the algorithms and find that the questions the algorithm outputs the correct answer have a smaller lack-of-fit thus fitting the thinking hierarchy model better. Therefore, we can use the lack-of-fit index as a reliability index of the algorithms. 

We additionally pick several representative examples for each study (Figure~\ref{fig:study}) where the matrices are ranked by the default algorithm and the diagonal area modified like Example~\ref{eg:circle}. In all of these examples, the plurality is incorrect while our approach is correct. Results of other questions are illustrated at \url{https://elicitation.info/classroom/1/}. Detailed explanations are illustrated in Appendix~\ref{sec:detail} and here we provide some highlights. First, our approach elicits a rich hierarchy. For example, the taxicab problem is borrowed from \citet{kahneman2011thinking} and previous studies show that people usually ignore the base rate and report `80\%'. The imagined levels can be 41\%$\rightarrow$80\%. We elicit a much richer level ``\textbf{41\%}$\rightarrow$50\%$\rightarrow$80\%$\rightarrow$12\%$\rightarrow$15\%$\rightarrow$20\%''. Second, the most sophisticated level may fail to predict the least one. In the Taxicab problem, the correct ``41\%'' supporters successfully predict the common wrong answer ``80\%''. However, they fail to predict the surprisingly wrong answers ``12\%,20\%'', which are in contrast successfully predicted by ``80\%'' supporters. Our model is sufficiently general to allow this situation. Third, even for problems (like Go) without famous mistakes, our approach still works. Moreover, in the boundary question, we identify the correct answer without any prior when only 3 respondents are correct.

\section{Discussion}

One future direction is to consider incentives in our paradigm like the literature of information elicitation without verification \cite{MRZ05,prelec2004bayesian,dasgupta2013crowdsourced,2016arXiv160303151S,kong2020dominantly}. We have asked a class of students at Peking University: \textit{why are bar chairs high?} using our paradigm. We cluster the answers by hand. The plurality answer is ``the bar counter is high'' and our top-ranking answer is ``better eye contact with people who stand''. Thus, another future diction is to extend our approach to the scenario where people's answers are sentences, where we can apply NLP to cluster them automatically. In summary, we propose the first empirically validated method to learn the thinking hierarchy without any prior in the general problem-solving scenarios. Potentially, our paradigm can be used to make a better decision when we crowd-source opinions in a new field with little prior information. Moreover, when we elicit the crowds' opinions for a policy, with the thinking hierarchy information, it's possible to understand the crowds' opinions better. However, regarding the negative impact, it may be easier to implement a social media manipulation of public opinion with the full thinking hierarchy of the crowds. One interesting future direction is to explore the impact of the thinking hierarchy information.  

\section*{Acknowledgement}

This research is supported by National Natural Science Foundation of China award number~62002001. We would like to thank Xiaotie Deng, Xiaoming Li, Grant Schoenebeck, and David Parkes for their useful suggestions, and all participants of our studies for their time and efforts. 

\newpage

\bibliographystyle{plainnat}
\bibliography{main.bbl}

\newpage

\appendix

\section{Data Collection}\label{sec:data}

\paragraph{Study 1: Math problems} In total, we assign 35 math problems by 4 online questionnaires. One of them (d) consists of 5 problems. Three of them (a,b,c) consist of 10 problems. Some of these problems are classic interview problems like Monty Hall problem or borrowed from \textit{Thinking, fast and slow}. The other problems are a subset of math Olympiad contest problems for elementary school. The problems cover areas of probability, combinatorics, and geometry.  

We recruit the respondents by an online announcement. 76 respondents participate in questionnaire a, 72 respondents participate in b, 28 respondents participate in c, 247 respondents participate in d.  Each respondent receives around 0.8 dollars and spends 16 minutes per 10 questions.

\paragraph{Study 2: Life-and-death Go problems} We assign 3 questionnaires. Each questionnaire consists of 10 Life-and-death Go problems. The problems are a subset of Life-and-death problems on the quiz site (https://www.101weiqi.com/). 

We post an announcement on our public account to recruit respondents who know how to play Go. We prepare a simple sample Life-and-death Go problem and only recruit respondents who answer it correctly. 76 respondents participate in questionnaire a, 39 respondents participate in b, 28 respondents participate in c. Respondents' payments depend on the accuracy of their answers and each respondent receives 3 dollars on average for each study. The average time each respondent spends on each study is about one hour.

\paragraph{Study 3: General knowledge question} In total we assign 44 questions by 4 online questionnaires. Questionnaire a consists of 12 questions. Questionnaire b consists of 5 questions. Questionnaire c consists of 10 questions. Questionnaire d consists of 13 questions. Questionnaire e and f both consist of 2 questions.  The questions are a subset of the questions of a famous Quiz show in China. 

Some respondents are recruited by our online announcement. Some respondents are recruited from an online platform in China (https://www.wjx.cn) which is very similar to Amazon Mechanical turk. 125 respondents participate in questionnaire a. 247 respondents participate in questionnaire b. 98 respondents participate in questionnaire c, 298 respondents participate in questionnaire d, 28 respondents participated in questionnaire e, 35 respondents participated in questionnaire f. Each respondent receives around 0.8 dollars and spends around 4 minutes per 10 questions.

\paragraph{Study 4: Chinese character pronunciation} In total we assign 42 questions by 4 online questionnaires a, b, c, d. Two of them (b, c) ask 10 questions. Questionnaire a asks 11 questions. Questionnaire d asks 12 questions. For simplicity of the format, we ask each respondent to label the tone of the pronunciation as a number and attach the number after the answer. For example, $\bar{a}=a1, \acute{a}=a2, \check{a}=a3,\grave{a}=a4$. For the first three questionnaires, we recruit the respondents by an online announcement. The last one is conducted in one class in EECS department at a top university in China. 46 respondents participate in questionnaire a, 63 respondents participate in b, 35 respondents participate in c. 105 respondents participate in d. For each study, each respondent receives around 0.8 dollars and spends 3 minutes on average.

Overall, for math and go study, the respondents who signed up for our studies are highly educated people who are able to understand the terms appearing in the sample questions posted in our announcement. Most of them are from top universities in China. Our results show that even when the majority of these highly educated respondents are wrong, we can identify the correct answer. For the remaining studies, some of the respondents are recruited by the online platform which is similar to Amazon Mechanical Turk. Our algorithm beats the baseline in these studies as well.

\paragraph{Instructions given to participants}

\begin{figure}[!ht]
\centering
\includegraphics[width=7cm]{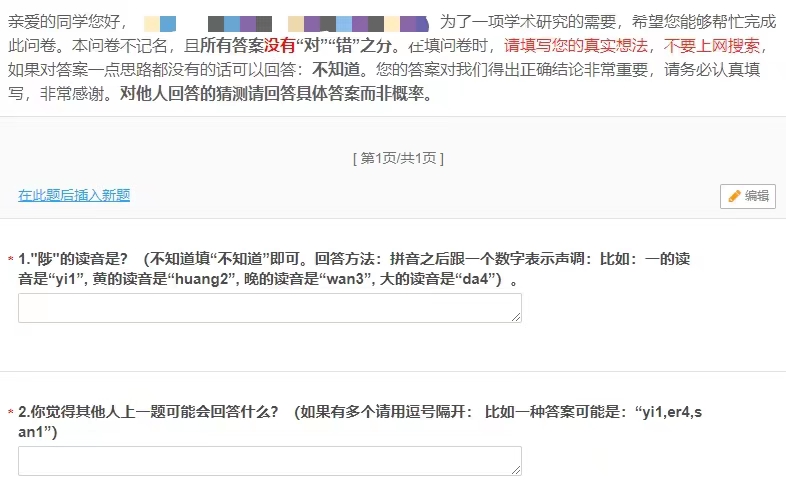}
\caption{screenshot of instructions given to participants}
\label{fig:screenshot}
\end{figure}
All the questions and instructions are given in Chinese and show as Figure~\ref{fig:screenshot}. The translation of the upper part instruction is as below.
Dear participants, we invite you to complete this questionnaire for a scientific study. We will not collect your name and there is no right or wrong answer to the questions. When filling this questionnaire, please write your true belief and do not search on the internet. If you don't have any clue to a question, you can answer ``I don't know''. Please complete this questionnaire carefully. Thank you very much. Note that the prediction of other people's answers is to answer specific answers instead of probabilities.

The translation of the example question shown in the lower part of Figure~\ref{fig:screenshot} is as below.
\begin{enumerate}
    \item What is the pronunciation of ``\begin{CJK*}{UTF8}{gbsn}陟\end{CJK*}''?(You can answer ``I don't know''.The correct way to answer is the characters of pinyin followed by a number representing the tone. For example, the pronunciation of ``\begin{CJK*}{UTF8}{gbsn}一\end{CJK*}'' is ``yi1'', the pronunciation of ``\begin{CJK*}{UTF8}{gbsn}黄\end{CJK*}'' is ``huang2'', the prounciation of ``\begin{CJK*}{UTF8}{gbsn}晚\end{CJK*}'' is ``wan3'', the pronunciation of ``\begin{CJK*}{UTF8}{gbsn}大\end{CJK*}'' is ``da4''.)
    \item What do you think other people will answer, to the previous question?(If there are multiple predictions, please use comma to separate them. A possible answer can be: ``yi1,er2,san3,si4''.)
\end{enumerate}
\section{Answer-Ranking Algorithm}\label{sec:alg}

\paragraph{The default algorithm} Our default algorithm aims to find the rank that maximizes the sum of the square of the elements in the upper triangular area of the Answer-Prediction matrix. We can enumerate all possible ranks to find the optimal which requires $O(|A|!*{|A|}^2)$. Here we use a more efficient dynamic programming algorithm which requires $O(2^{|A|}*{|A|}^2)$ (e.g. when $|A|=10$, $2^{|A|}=1024, |A|!=3,628,800$). The core observation here is that, the optimal rank of the answer set $A$ can be obtained by the optimal rank of the subset $S$. Therefore, we can run the dynamic programming by enumerating the subsets in a predetermined order\footnote{The order satisfies that for any set $S_1\subset S_2$, $S_1$ is enumerated before $S_2$.}. Figure~\ref{fig:workflow} illustrates the algorithm. 

\paragraph{The variant algorithm} The following observation helps us pick proper set of $\mathbf{W}$ in the variant algorithm. 
\begin{observation}
when $\mathbf{M}=\mathbf{W}^{\top}\mathbf{\Lambda}\mathbf{W}$ and $\mathbf{W}\in\mathcal{I}$, for all non-zero elements of $\mathbf{W}$, $\frac{W_{t,a}}{W_{t,a'}}=\frac{p_a}{p_{a'}}$ where $p_a$ is the sum of the $a^{th}$ row of $\mathbf{M}$. 
\end{observation}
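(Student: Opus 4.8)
The plan is to compute the row sum $p_a$ directly from the factorization $\mathbf{M}=\mathbf{W}^{\top}\mathbf{\Lambda}\mathbf{W}$ and then exploit the defining sparsity of a semi-orthogonal $\mathbf{W}\in\mathcal{I}$, namely that each column has exactly one non-zero entry. First I would expand
\[ p_a = \sum_g M_{a,g} = \sum_g \sum_{t,t'} W_{t,a}\,\Lambda_{t,t'}\,W_{t',g} = \sum_{t,t'} W_{t,a}\,\Lambda_{t,t'}\,\Big(\sum_g W_{t',g}\Big). \]
Writing $S_{t'}:=\sum_g W_{t',g}$ for the $t'$-th row sum of $\mathbf{W}$, this reads $p_a=\sum_t W_{t,a}\big(\sum_{t'}\Lambda_{t,t'}S_{t'}\big)$.

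The key step is the single-non-zero-per-column structure of $\mathbf{W}\in\mathcal{I}$: for each answer $a$ there is a unique type $c(a)$ with $W_{c(a),a}\neq 0$, so every term with $t\neq c(a)$ in the outer sum vanishes. Hence $p_a = W_{c(a),a}\cdot R_{c(a)}$, where $R_t:=\sum_{t'}\Lambda_{t,t'}S_{t'}$ depends only on the cluster index $t$, not on the particular answer. Two non-zero entries $W_{t,a}$ and $W_{t,a'}$ sitting in the same row $t$ correspond to answers $a,a'$ that both belong to cluster $t$, i.e. $c(a)=c(a')=t$. Taking the ratio, the common factor $R_t$ cancels and leaves $p_a/p_{a'}=W_{t,a}/W_{t,a'}$, which is exactly the claimed identity.

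The only thing left to verify is that the division is legitimate, i.e. that $R_t\neq 0$ so the cancellation is valid and $p_a,p_{a'}$ are not simultaneously zero. This follows from the model assumption that the marginal answer probability is positive for every $a$: since $p_a=W_{c(a),a}R_{c(a)}$ with $W_{c(a),a}\neq 0$, positivity of $p_a$ forces $R_{c(a)}=p_a/W_{c(a),a}\neq 0$. I do not expect a genuine obstacle, as the entire argument rests on the support pattern of $\mathbf{W}$; in fact the orthonormality condition $\mathbf{W}\mathbf{W}^{\top}=\mathbf{I}$ of $\mathcal{I}$ is not needed for the ratio, since the identity $p_a=W_{c(a),a}R_{c(a)}$ is scale-invariant within each row. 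The practical point worth flagging is that, once the clustering is fixed, the within-cluster magnitudes of $\mathbf{W}$ are pinned down (up to the row normalization) by the observable row sums $p_a$, so the variant algorithm effectively only searches over clusterings rather than over the continuum of entries of $\mathbf{W}$.
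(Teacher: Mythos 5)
Your proof is correct and follows essentially the same route as the paper's: both expand the row sum $p_a$ through the factorization $\mathbf{M}=\mathbf{W}^{\top}\mathbf{\Lambda}\mathbf{W}$ (your $R_t$ is exactly the paper's vector $\mathbf{v}=\mathbf{\Lambda}\mathbf{W}\mathbf{1}$, just written in index form), invoke the one-non-zero-entry-per-column structure of $\mathbf{W}\in\mathcal{I}$ to collapse the sum to $p_a=W_{c(a),a}R_{c(a)}$, and justify cancelling the common factor via the model assumption $p_a>0$. The only cosmetic difference is matrix notation versus componentwise notation, so there is nothing substantive to add.
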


\begin{proof}
$\mathbf{1}$ denotes a $|A|\times 1$ column vector where all elements are one. \begin{align*}
\mathbf{p}=\mathbf{M}\mathbf{1}=\mathbf{W}^{\top}\mathbf{\Lambda}\mathbf{W}\mathbf{1}=\mathbf{W}^{\top}\mathbf{v}
\end{align*}
Each row of $\mathbf{W}^{\top}$ has and only has one non-zero element. Then for any $a,a'$ where $W_{t,a}>0$ and $W_{t,a'}>0$, we have $\forall s\neq t, W_{s,a}=0\ \text{and}\ W_{s,a'}=0$. Therefore, \begin{align*}
    \frac{p_a}{p_{a'}}&=\frac{\sum_s{W_{s,a}*v_s}}{\sum_s{W_{s,a'}*v_s}}\\
                    &=\frac{W_{t,a}*v_t+\sum_{s\neq t}{0*v_s}}{W_{t,a'}*v_t+\sum_{s\neq t}{0*v_s}}\\
                    &=\frac{W_{t,a}*v_t}{W_{t,a'}*v_t}\\
\end{align*}
Since $\forall a, p_a>0$, we have $\frac{p_a}{p_{a'}}>0$ and $v_t\neq 0$. Therefore, we can cancel out $v_t$ to get $\frac{p_a}{p_{a'}}=\frac{W_{t,a}}{W_{t,a'}}$.

\end{proof}

In the model, $p_a$ represents the probability that a respondent answer $a$. With the above observation, instead of searching over all semi-orthogonal matrices, we can enumerate all possible partitions. In detail, for all $|T|\leq |A|$, we enumerate all possible partitions $\mathbf{b}\in T^{1\times |A|}$ where each $b_a$ indicates the type answer $a$ belongs to. Each partition $\mathbf{b}$ induces a matrix $\mathbf{W}$ such that for all $t,a$, $W_{t,a}=C_{t,a}*p_a$. We then normalize $\mathbf{W}$'s rows such that $\mathbf{W}\mathbf{W}^{\top}=\mathbf{I}$ and pick $\mathbf{W}$ and the optimal order of $\mathbf{W}$'s rows to maximize our objective. Moreover, we reduce the time complexity by using the default algorithm as a building block to pick the optimal order of $\mathbf{W}$'s rows, i.e., the rank of types. 

In both algorithms, we also take the percentage of answers $\mathbf{p}$ as input. It is used to construct $\mathcal{W}$ for the variant algorithm and break tie for both algorithms. We state the pseudo-codes as follows. 


\begin{figure}[!ht]
\centering
\includegraphics[width=\textwidth]{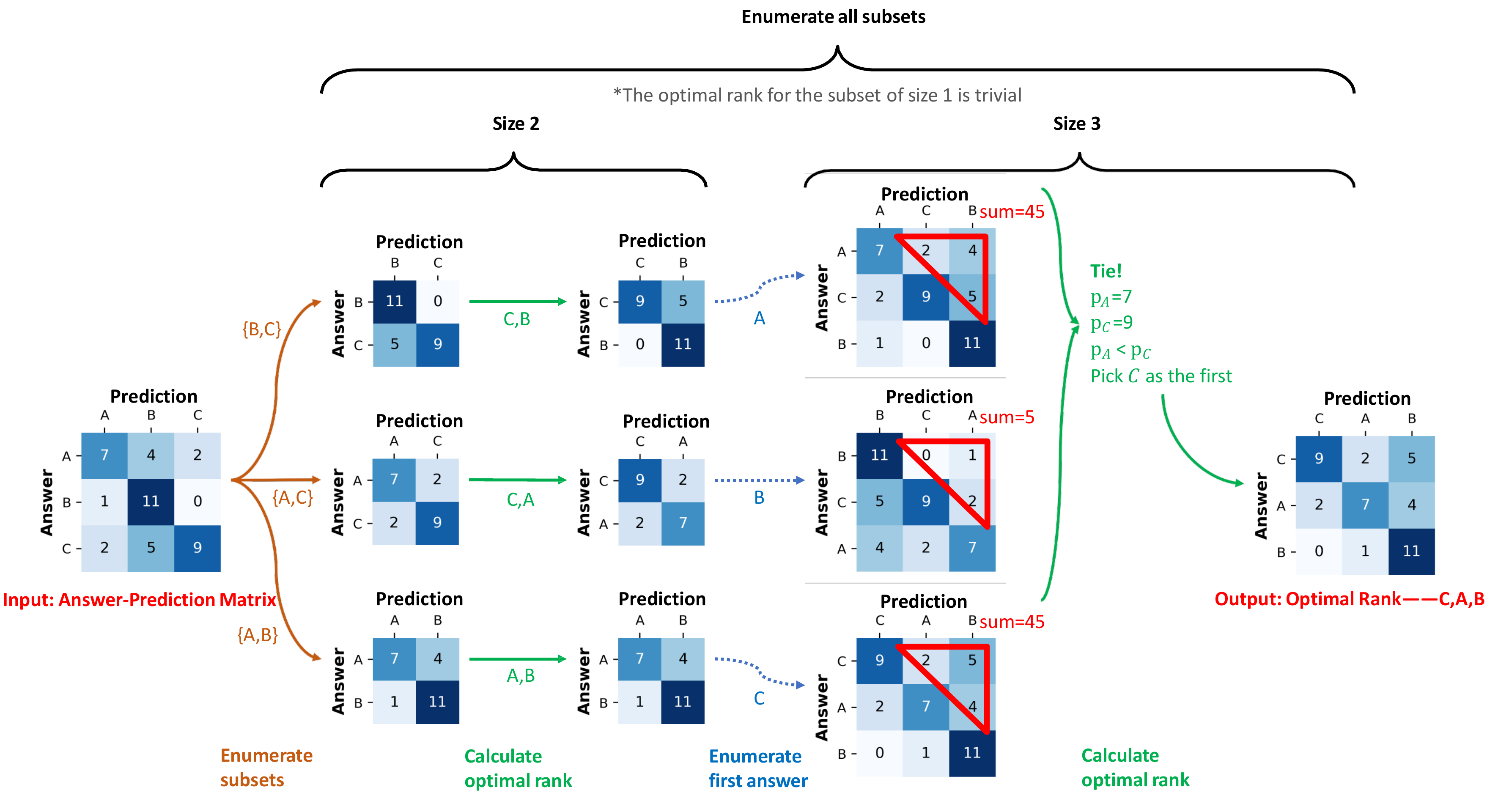}
\caption{Workflow of the default Answer-Ranking algorithm}
\label{fig:workflow}
\end{figure}

\hzh{Change notation of $A$?}
\begin{algorithm}[!ht]
\SetAlgoNoLine
\caption{Answer-Ranking algorithm (default)}
\label{algo:answer_ranking}
    \KwIn{Answer-Prediction matrix $\mathbf{M}$, Percentage of answers $\mathbf{p}$, Answer set $A$}
    \KwOut{Optimal answer rank $\pi^*$}
    $\pi^*_S$ is the optimal rank of subset $S$\\
    Function $\mathrm{Up\_sum}(\pi):=\sum_{i< j}M_{\pi(i),\pi(j)}^2$\tcp*[r]{use memoization to reduce the time complexity}
    \For(\tcp*[f]{enumerate all subsets of $A$ in a predetermined order}){$\mysubset\subseteq A$}{
        initialize $\pi_\mysubset$;\\
        \For(\tcp*[f]{enumerate the first answer in the rank of \mysubset}){$\answer\in \mysubset$}{
            $\hat{\pi}(1)=a_1,\hat{\pi}(2:|\mysubset|)=\pi^*_{\mysubset\backslash\{\answer\}}$\tcp*[r]{attach $a_1$ to the optimal rank of $\mysubset\backslash\{\answer\}$}
            $s_1=\mathrm{Up\_sum}(\hat{\pi})$;\\
            $s=\mathrm{Up\_sum}(\pi_\mysubset)$;\\
            $a=\pi_\mysubset(1)$;\\
            \uIf{$(s_1 > s)$ } {
                $\pi_\mysubset=\hat{\pi}$;
            }
            \ElseIf(\tcp*[f]{tie-break rule}){$(s_1 == s)\wedge(p_{\answer}>p_{a})$} {
                $\pi_\mysubset=\hat{\pi}$;
            }
        }
        $\pi^*_\mysubset=\pi_\mysubset$;
    }
    $\pi^*=\pi^*_A$;
\end{algorithm}
\begin{algorithm}[!ht]
\SetAlgoNoLine
\caption{Answer-Ranking algorithm (variant)}
\label{algo:answer_ranking_variant}
    \KwIn{Answer-Prediction matrix $\mathbf{M}$, Percentage of answers $\mathbf{p}$, Answer set $A$}
    \KwOut{Optimal matrix $\mathbf{W}^*$}
    $\mathbf{b}$ is a partition of the answer set $A$, $b_a$ indicates the type answer $a$ belongs to\\
    $\mathbf{B}$ is the set of all possible partitions of answer set $A$\\
    Initialize $obj^*,\mathbf{W}^*$;\\
    \For(\tcp*[f]{enumerate all partitions of $A$}){$\mathbf{b}\in \mathbf{B}$}{
        Initialize $\hat{\mathbf{W}}$, $\hat{\mathbf{p}}$;\\
        \For{$a\in A$}{
            $\hat{W}_{b_a,a}=\sqrt{\frac{p_a^2}{\sum_{b_{a'}=b_a}p_{a'}^2}}$\tcp*[r]{normalize $\hat{\mathbf{W}}$ such that $\hat{\mathbf{W}}\hat{\mathbf{W}}^\top=\mathbf{I}$}
            $\hat{p}_{b_a}=\max(\hat{p}_{b_a}, p_{a})$\tcp*[r]{will be used to break tie later}
        }
        $\hat{\mathbf{\Lambda}}=\hat{\mathbf{W}}\mathbf{M}\hat{\mathbf{W}}^{\top}$;\\
        $\pi=AR(\hat{\mathbf{\Lambda}}, \hat{\mathbf{p}})$\tcp*[r]{rank the types by the default algorithm}
        construct $\mathbf{W}$ such that $\mathbf{w}_i=\hat{\mathbf{w}}_{\pi(i)}$;\\
        $obj=\sum_{i\leq j} (\mathbf{W} \mathbf{M}\mathbf{W}^{\top})^2_{i,j}$;\\
        \If{$obj>obj^*$}{
            $obj^*=obj$;\\
            $\mathbf{W}^*=\mathbf{W}$;
        }
    }
\end{algorithm}

\section{Proofs}\label{sec:proofs}

\begin{proof}[Proof of Proposition~\ref{prop:unique}]
 We first consider the case where $|T|= |A|$ and $\mathbf{W}$ is a monomial matrix with positive elements. A monomial matrix is a permuted diagonal matrix. Then for all $\mathbf{W}'^{\top}\mathbf{\Lambda'}\mathbf{W}'=\mathbf{W}^{\top}\mathbf{\Lambda}\mathbf{W}$, we have $\mathbf{\Lambda}=(\mathbf{W}'\mathbf{W}^{-1})^{\top}\mathbf{\Lambda'}(\mathbf{W}'\mathbf{W}^{-1})$. 

Note that when $\mathbf{W}$ is a monomial matrix with positive elements, its inverse $\mathbf{W}^{-1}$ is also a monomial matrix with positive elements. Thus $\mathbf{U}=\mathbf{W}'\mathbf{W}^{-1}$ is a non-negative matrix. We will show the following statement. Recall that $P_{\mathbf{\Lambda}}$ is the set of permutation matrices such that $\mathbf{\Pi}^{\top} \mathbf{\Lambda} \mathbf{\Pi}$ is still upper-triangular.

\begin{claim}\label{claim:unique}
If  $\mathbf{\Lambda}=\mathbf{U}^{\top}\mathbf{\Lambda'}\mathbf{U}$ where $\mathbf{U}$ is non-negative and both $\mathbf{\Lambda},\mathbf{\Lambda'}$ are non-negative upper-triangular matrices with positive diagonal elements, then $\mathbf{U}=\mathbf{\Pi}^{-1}\mathbf{D}$ where $\mathbf{\Pi}\in P_{\mathbf{\Lambda}}$ and $\mathbf{D}$ is a positive diagonal matrix.
\end{claim}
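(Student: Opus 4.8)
The plan is to first argue that the hypotheses force $\mathbf{U}$ to be a monomial (generalized permutation) matrix with strictly positive non-zero entries, and then to check separately that the underlying permutation preserves the triangular shape of $\mathbf{\Lambda}$. As a preliminary I would record that $\mathbf{U}$ is invertible: since $\mathbf{\Lambda}$ is upper-triangular with positive diagonal, $\det\mathbf{\Lambda}=\prod_k \Lambda_{kk}>0$, and $\mathbf{\Lambda}=\mathbf{U}^{\top}\mathbf{\Lambda'}\mathbf{U}$ gives $\det\mathbf{\Lambda}=(\det\mathbf{U})^2\det\mathbf{\Lambda'}$, so $\det\mathbf{U}\neq 0$. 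In particular $\mathbf{U}$ has no zero row and no zero column.

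The heart of the argument is to expand the strictly-lower-triangular entries of $\mathbf{\Lambda}$ and let non-negativity do the work. For $i>j$ we have $0=\Lambda_{ij}=\sum_{k\leq l} U_{ki}\,\Lambda'_{kl}\,U_{lj}$, where the sum ranges only over $k\leq l$ because $\mathbf{\Lambda'}$ is upper-triangular. Every summand is non-negative, so each one must vanish. Restricting to the diagonal summands $k=l$ gives $U_{ki}\,\Lambda'_{kk}\,U_{kj}=0$ for every $k$, and since $\Lambda'_{kk}>0$ this yields $U_{ki}U_{kj}=0$ for all $k$ and all $i>j$. Hence in each row $k$ of $\mathbf{U}$ no two distinct columns can simultaneously carry a positive entry, i.e.\ every row has at most one non-zero entry. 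Combined with invertibility, $\mathbf{U}$ is a monomial matrix with strictly positive non-zero entries, so it may be written $\mathbf{U}=\mathbf{\Pi}^{-1}\mathbf{D}=\mathbf{\Pi}^{\top}\mathbf{D}$ for a permutation matrix $\mathbf{\Pi}$ and a positive diagonal $\mathbf{D}$.

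It then remains to verify $\mathbf{\Pi}\in P_{\mathbf{\Lambda}}$. Substituting $\mathbf{U}=\mathbf{\Pi}^{\top}\mathbf{D}$ into the congruence gives $\mathbf{\Lambda}=\mathbf{D}\,\mathbf{\Pi}\,\mathbf{\Lambda'}\,\mathbf{\Pi}^{\top}\mathbf{D}$, whence $\mathbf{\Pi}^{\top}\mathbf{\Lambda}\mathbf{\Pi}=(\mathbf{\Pi}^{\top}\mathbf{D}\mathbf{\Pi})\,\mathbf{\Lambda'}\,(\mathbf{\Pi}^{\top}\mathbf{D}\mathbf{\Pi})$. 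Since conjugating a diagonal matrix by a permutation again produces a positive diagonal matrix, $\mathbf{\Pi}^{\top}\mathbf{\Lambda}\mathbf{\Pi}$ is obtained from $\mathbf{\Lambda'}$ by scaling its rows and columns by positive numbers, which leaves the zero pattern unchanged; thus $\mathbf{\Pi}^{\top}\mathbf{\Lambda}\mathbf{\Pi}$ is upper-triangular, which is exactly the statement $\mathbf{\Pi}\in P_{\mathbf{\Lambda}}$, completing the claim.

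I expect no analytic obstacle here; the argument is purely combinatorial once one notices that every summand in the expansion is individually non-negative. The only genuinely delicate points are bookkeeping: confirming that the diagonal terms of $\mathbf{\Lambda'}$ \emph{alone}, rather than the full expansion, already force the monomial structure, and carefully tracking the permutation versus its inverse so that the final conjugation is seen to yield an \emph{upper}-triangular matrix (hence membership in $P_{\mathbf{\Lambda}}$) rather than a lower-triangular one. The positivity of the diagonals of $\mathbf{\Lambda}$ and $\mathbf{\Lambda'}$ is used twice and is essential: once to guarantee invertibility, and once to divide out $\Lambda'_{kk}$.
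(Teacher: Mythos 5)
Your proof is correct and follows essentially the same route as the paper's: both arguments expand each strictly lower-triangular entry of $\mathbf{\Lambda}$ as a sum of non-negative terms, use the diagonal terms together with $\Lambda'_{kk}>0$ to force each row of $\mathbf{U}$ to have at most one non-zero entry, conclude that $\mathbf{U}$ is monomial, and then check via the conjugation bookkeeping that $\mathbf{\Pi}\in P_{\mathbf{\Lambda}}$. The only cosmetic differences are that you establish non-degeneracy via the determinant identity $\det\mathbf{\Lambda}=(\det\mathbf{U})^2\det\mathbf{\Lambda'}$ whereas the paper instead uses $\Lambda_{ii}=\mathbf{u}_i^{\top}\mathbf{\Lambda'}\mathbf{u}_i>0$ to rule out zero columns, and that you spell out the zero-pattern argument for membership in $P_{\mathbf{\Lambda}}$, which the paper leaves implicit.
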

The above statement implies that $\mathbf{W}'=\mathbf{\Pi}^{-1}\mathbf{D}\mathbf{W}$. 
\begin{proof}[Proof of Claim~\ref{claim:unique}]
We first prove that $\mathbf{U}$ must be a monomial matrix. We use $\mathbf{u}_1,\mathbf{u}_2,\cdots, \mathbf{u}_{|T|}$ to denote $\mathbf{U}$'s columns. For all $i$, $\mathbf{u}_i^{\top} \mathbf{\Lambda'} \mathbf{u}_i>0$, this implies that $\mathbf{U}$ does not have any column that is zero everywhere. For $i<j$, $\mathbf{u}_i^{\top} \mathbf{\Lambda'} \mathbf{u}_j=0$ since $\mathbf{\Lambda}$ is upper-triangular. Therefore, there does not exist $t$ such that both $\mathbf{u}_i(t),\mathbf{u}_j(t)>0$ since otherwise $\mathbf{u}_i^{\top} \mathbf{\Lambda'} \mathbf{u}_j>\mathbf{u}_i(t) \Lambda'_{t,t}\mathbf{u}_j(t)>0$. 

Therefore, for each row $t$ of $\mathbf{U}$, there exists at most one non-zero/positive elements because otherwise there must exist $i< j$ such that $ \mathbf{u}_i(t),\mathbf{u}_j(t)>0$ which contradicts with the fact that $\mathbf{u}_i^{\top} \mathbf{\Lambda'} \mathbf{u}_j=0$. Thus, there exist at most $|T|$ non-zero elements in $\mathbf{U}$. Given that $\mathbf{U}$ does not have any column that is zero everywhere, $\mathbf{U}$ must be monomial.

Given that $\mathbf{U}$ is monomial, $\mathbf{U}$ can be written as the form of $\mathbf{\Pi}^{-1}\mathbf{D}$. Thus, $\mathbf{\Lambda}=\mathbf{D}\mathbf{\Pi}^{-1\top} \mathbf{\Lambda'} \mathbf{\Pi}^{-1} \mathbf{D}$ which implies that $\mathbf{\Pi}\in P_{\mathbf{\Lambda}}$.

In general, when $|T|\leq |A|$ and $|T|$ columns $c_1,c_2,\cdots, c_{|T|}$ of $\mathbf{W}$ consist of a monomial matrix, we use a $|A|\times |T|$ matrix $\mathbf{C}$ such that $\mathbf{W} \mathbf{C}$ equals $\mathbf{W}$ at $c_1,c_2,\cdots, c_{|T|}$ and is zero elsewhere. We can construct $\mathbf{C}$ by setting it equal to an identity matrix at $c_1,c_2,\cdots, c_{|T|}$ and zero elsewhere.

When $\mathbf{W}'^{\top}\mathbf{\Lambda'}\mathbf{W}'=\mathbf{W}^{\top}\mathbf{\Lambda}\mathbf{W}$, we have
$\mathbf{C}^{\top}\mathbf{W}'^{\top}\mathbf{\Lambda'}\mathbf{W}'\mathbf{C} = \mathbf{C}^{\top}\mathbf{W}^{\top}\mathbf{\Lambda}\mathbf{W} \mathbf{C} $. Our previous analysis implies that $\mathbf{W}'\mathbf{C}=\Pi^{-1}\mathbf{D}\mathbf{W} \mathbf{C} $. Thus, $\Pi^{-1}\mathbf{D}=\mathbf{W}'\mathbf{C} (\mathbf{W} \mathbf{C})^{-1}$

Moreover, we also have $\mathbf{C}^{\top}\mathbf{W}'^{\top}\mathbf{\Lambda'}\mathbf{W}'=\mathbf{C}^{\top}\mathbf{W}^{\top}\mathbf{\Lambda}\mathbf{W}$. Thus, there exists a $|T|\times |T|$ matrix $\mathbf{B}=(\mathbf{C}^{\top}\mathbf{W}'^{\top}\mathbf{\Lambda'})^{-1} \mathbf{C}^{\top}\mathbf{W}^{\top}\mathbf{\Lambda} $ such that $\mathbf{W}'=\mathbf{B}\mathbf{W}$. In such case, $\mathbf{W}'\mathbf{C}=\mathbf{B}\mathbf{W}\mathbf{C}$. Therefore, $\mathbf{B}=\mathbf{W}'\mathbf{C} (\mathbf{W} \mathbf{C})^{-1}=\Pi^{-1}\mathbf{D}$. This implies that $\mathbf{W}'=\mathbf{\Pi}^{-1}\mathbf{D}\mathbf{W}$.

\end{proof}

\end{proof}

\begin{proof}[Proof of Lemma~\ref{lem:minmax}]

\begin{align*}
||\mathbf{M}-\mathbf{W}^{\top}\mathbf{\Lambda}\mathbf{W}||_F^2=&\tr\left((\mathbf{M}-\mathbf{W}^{\top}\mathbf{\Lambda}\mathbf{W})(\mathbf{M}^{\top}-\mathbf{W}^{\top}\mathbf{\Lambda}^{\top}\mathbf{W})\right)\\
=& ||\mathbf{M}||_F^2 -\tr(\mathbf{W}^{\top}\mathbf{\Lambda}\mathbf{W} \mathbf{M}^{\top})-\tr(\mathbf{M} \mathbf{W}^{\top}\mathbf{\Lambda}^{\top}\mathbf{W})+\tr(\mathbf{W}^{\top}\mathbf{\Lambda}\mathbf{W} \mathbf{W}^{\top}\mathbf{\Lambda}^{\top}\mathbf{W})\\ \tag{$\tr(\mathbf{A}^{\top})=\tr(\mathbf{A})$ and $\mathbf{W}\mathbf{W}^{\top}=\mathbf{I}$}
=& ||\mathbf{M}||_F^2 -2\tr(\mathbf{W}^{\top}\mathbf{\Lambda}\mathbf{W} \mathbf{M}^{\top})+\tr(\mathbf{W}^{\top}\mathbf{\Lambda}\mathbf{\Lambda}^{\top}\mathbf{W})\\ \tag{$\tr(\mathbf{A}\mathbf{B})=\tr(\mathbf{B}\mathbf{A})$}
=& ||\mathbf{M}||_F^2 -2\tr(\mathbf{W}^{\top}\mathbf{\Lambda}\mathbf{W} \mathbf{M}^{\top})+||\mathbf{\Lambda}||_F^2\\
\end{align*}

Moreover, \begin{align*}
||\mathbf{\Lambda}-\mathbf{W}\mathbf{M}\mathbf{W}^{\top}||_F^2=&\tr\left((\mathbf{\Lambda}-\mathbf{W}\mathbf{M}\mathbf{W}^{\top})(\mathbf{\Lambda}^{\top}-\mathbf{W}\mathbf{M}^{\top}\mathbf{W}^{\top})\right)\\
=& ||\mathbf{\Lambda}||_F^2 -2 \tr(\mathbf{\Lambda} \mathbf{W}\mathbf{M}^{\top}\mathbf{W}^{\top} )+ ||\mathbf{W}\mathbf{M}\mathbf{W}^{\top} ||_F^2
\end{align*}

Note that $\tr(\mathbf{\Lambda} \mathbf{W}\mathbf{M}^{\top}\mathbf{W}^{\top} )=\tr(\mathbf{W}^{\top}\mathbf{\Lambda}\mathbf{W} \mathbf{M}^{\top})$, we have 

\[ ||\mathbf{M}-\mathbf{W}^{\top}\mathbf{\Lambda}\mathbf{W}||_F^2 =||\mathbf{\Lambda}-\mathbf{W}\mathbf{M}\mathbf{W}^{\top}||_F^2-||\mathbf{W}\mathbf{M}\mathbf{W}^{\top} ||_F^2+ ||\mathbf{M}||_F^2 \]

Therefore, \[\argmin_{\mathbf{W},\mathbf{\Lambda}} ||\mathbf{M}-\mathbf{W}^{\top}\mathbf{\Lambda}\mathbf{W}||_F^2=\argmin_{\mathbf{W},\mathbf{\Lambda}} ||\mathbf{\Lambda}-\mathbf{W}\mathbf{M}\mathbf{W}^{\top}||_F^2-||\mathbf{W}\mathbf{M}\mathbf{W}^{\top} ||_F^2 \]

The optimal upper-triangular $\mathbf{\Lambda}^*$ should be the upper-triangular part of $\mathbf{W}\mathbf{M}\mathbf{W}^{\top}$. That is, for all $i\leq j$, $\Lambda_{ij}^*=(\mathbf{W}\mathbf{M}\mathbf{W}^{\top})_{ij}$. With optimal $\mathbf{\Lambda}^*$, $||\mathbf{\Lambda}^*-\mathbf{W}\mathbf{M}\mathbf{W}^{\top}||_F^2-||\mathbf{W}\mathbf{M}\mathbf{W}^{\top} ||_F^2$ becomes $- \sum_{i\leq j}(\mathbf{W}\mathbf{M}\mathbf{W}^{\top})_{ij}^2$. Therefore, $\min_{\mathbf{W}\in\mathcal{W},\mathbf{\Lambda}} ||\mathbf{M}-\mathbf{W}^{\top}\mathbf{\Lambda}\mathbf{W}||_F^2$ is equivalent to solving $\max_{\mathbf{W}\in\mathcal{W}} \sum_{i\leq j} (\mathbf{W}\mathbf{M}\mathbf{W}^{\top})^2_{i,j}$.

\end{proof}

\begin{proof}[Proof of Theorem~\ref{thm}]

$\mathcal{P}$ is a special case of $\mathcal{W}$. Thus, we can only prove the second part. 

Lemma~\ref{lem:minmax} directly shows that in general, $AR(\mathbf{M},\mathcal{W})$ will output the optimal $\mathbf{W}^*$ where $\mathbf{W}^*,\mathbf{\Lambda}^*=\mathrm{Up}(\mathbf{W}^*\mathbf{M}\mathbf{W}^{*\top})$ is a solution  to $\argmin_{\mathbf{W}\in\mathcal{W},\mathbf{\Lambda}} ||\mathbf{M}-\mathbf{W}^{\top}\mathbf{\Lambda}\mathbf{W}||_F^2$. 

When there exists $\mathbf{W}_0, \mathbf{W}_0\mathbf{W}^{\top}_0=\mathbf{I}$ such that $\mathbf{M}=\mathbf{W}^{\top}_0\mathbf{\Lambda}_0\mathbf{W}_0$ where $\mathbf{\Lambda}_0$ is a non-negative upper-triangular matrix, $AR(\mathbf{M},\mathcal{W})$ will output the exact solution. Moreover, when $\mathbf{W}_0\mathbf{W}^{\top}_0=\mathbf{I}$, the condition of uniqueness also satisfies. Therefore, there exists a positive diagonal matrix $\mathbf{D}$ and a $|T|\times |T|$ permutation matrix $\mathbf{\Pi}\in P_{\mathbf{\Lambda}_0}$ such that $\mathbf{W}^*=\mathbf{\Pi}^{-1}\mathbf{D}\mathbf{W}_0$. After we additionally normalize $\mathbf{W}^*$ to make it row-stochastic,  $\mathbf{W}^*=\mathbf{\Pi}^{-1}\mathbf{W}_0$. Thus, the variant algorithm $AR^{+}(\mathbf{M},\mathcal{W})$ finds the thinking hierarchy. 

\end{proof}

\begin{proof}[Proof of Proposition~\ref{prop:expect}]
Let the number of prediction each respondent gives follow a distribution $D$ where $D(i)$ is the probability that she will give $i$ predictions. Given $n$ respondents, the expected number $A_{a,g}$ will be
\begin{align*}
\mathrm{E}[A_{a,g}]=n*\sum_t p_t \mathbf{w}_t(a) \sum_i D(i)*i*\sum_{t'} p_{t\rightarrow t'} \mathbf{w}_{t'}(g) \propto \sum_t p_t \mathbf{w}_t(a)\sum_{t'} p_{t\rightarrow t'} \mathbf{w}_{t'}(g) = M_{a,g}
\end{align*}
\end{proof}

\section{Detailed explanations of the empirical results} \label{sec:detail}

For study 1, in addition to the circle problem illustrated previously, we additionally pick two famous Bayesian inference problems, the Monty Hall problem \cite{bertrand1889calcul} and the Taxicab problem \cite{kahneman2011thinking}. Both problems have a counter-intuitive correct answer. The Monty Hall problem's intuitive answer is ``1/2'' since the choices seem to be equally good. The correct answer is ``2/3''. In our study, the most popular answer is the incorrect common sense answer ``1/2''. The Taxicab problem has two pieces of information: a base rate and a witness's testimony. Without the testimony, the answer will be ``15\%''. Without considering the base rate, the answer will be ``80\%''. By combining the two pieces of information, the correct answer is ``41\%''. In our study, the most popular answer is the ``ignoring-base-rate'' answer, ``80\%'' since people are usually insensitive to prior probability \cite{tversky1974judgment}. Interestingly, our algorithm not only makes the correct answer the top-ranking but also demonstrates levels that are richer than the imagined levels. In the Monty Hall problem, we elicit levels ``\textbf{2/3}$\rightarrow$1/2$\rightarrow$1/3'' and in the Taxicab problem, we elicit levels ``\textbf{41\%}$\rightarrow$50\%$\rightarrow$80\%$\rightarrow$12\%$\rightarrow$15\%$\rightarrow$20\%''. We also find that the levels may not have a partial order structure. In the Taxicab problem, the correct ``41\%'' supporters successfully predict the common wrong answer ``80\%''. However, they fail to predict the surprisingly wrong answer ``12\%,20\%'', which are in contrast successfully predicted by ``80\%'' supporters. This shows that our approach not only elicits the most valuable answer but also provides a rich thinking hierarchy of people. 

Study 2 asks Life-and-death Go problems. Go/Weiqi is a classic board game that origins from ancient China. The core concept of Go game is Life-and-death \cite{MULLER2002145}. A group of stones' status is determined as either being ``alive'', where they remain on the board indefinitely, or ``dead'' where the group will have no liberties. The Life-and-death Go problems ask for the move that can kill or save a group of stones. Our study covers a variety of difficulty levels and we pick two interesting representative problems here. Both problems ask for black's move and are illustrated in Figure~\ref{fig:study} and correct answers are marked (black moves with white points). For the first problem, 13 people report C2, the plurality answer, and only 5 people report D3, the correct answer. Our algorithm outputs level \textbf{D3}$\rightarrow$C1$\rightarrow$B2$\rightarrow$C2$\rightarrow$B1. The plurality answer C2 is an aggressive move for black where black can capture a white stone soon. The correct answer D3 is not a good choice at first sight since it allows white stones to survive later. Another popular wrong answer C1 guarantees that the white stones cannot escape later. Moreover, compared to the correct answer D3, both C1 and C2 are more elegant from an artistic view. Other answers B1 and B2 can create the desired pattern in Go game, ``eye'' (liberties inside a group), in the lower-left corner. For the second problem, our algorithm successfully outputs the correct answer as the top-ranking and elicits level \textbf{B8}$\rightarrow$A8$\rightarrow$C2$\rightarrow$B6$\rightarrow$B7 while plurality picks B6. In this problem, popular moves A8, B6, B7 seem to be much safer than the correct move B8. However, interestingly, due to a special pattern in the upper part of this problem, adopting a more dangerous move B8 is more beneficial in this situation. 

For study 3, we pick two general knowledge questions, the boundary between China\&North Korea question and the Middle Age New Year question. Among 82 respondents, 74 respondents report the Yalu River. However, few people know that the Dooman River also makes part of the boundary between North Korea and China. The Songhua River is very famous in the northeastern part of China. Our algorithm elicits the level of ``\textbf{Yalu \& Dooman}$\rightarrow$Yalu$\rightarrow$Songhua''. Another question asks for the date of the Middle Age's new year. The naive answer is Jan $1^{st}$. Some people answer the day of Christmas, Dec $25^{th}$. the top-ranking answer is Apr $1^{st}$\footnote{This is the official answer of a national test in China.}. Our algorithm successfully elicits the level ``\textbf{Apr $1^{st}$}$\rightarrow$Dec $25^{th}$ $\rightarrow$ Jan $1^{st}$''. 

Study 4 is about the pronunciations of Chinese characters. Many Chinese characters are phono-semantic characters and composed of at least two parts. The semantic component indicates the general meaning of the compound character, while the phonetic component suggests the pronunciation of the compound character. We illustrate the results of two questions here. The first question asks the pronunciation of ``\begin{CJK*}{UTF8}{gbsn}睢\end{CJK*}'' (sui1)\footnote{The number 1 after sui represents the tone of the pronunciation. There are 4 tones:$\bar{a}=a1, \acute{a}=a2, \check{a}=a3,\grave{a}=a4$. }. It is not a commonly used character. Most people are familiar with ``\begin{CJK*}{UTF8}{gbsn}雎\end{CJK*}''(ju1) since this character is used in one of the best-known songs, ``Guan ju \begin{CJK*}{UTF8}{gbsn}关雎\end{CJK*}'', in ``Shi Jing'' (the Classic of Poetry), which is the first anthology of verse in China \cite{yu1983allegory}. ``\begin{CJK*}{UTF8}{gbsn}雎\end{CJK*}'' is very similar to ``\begin{CJK*}{UTF8}{gbsn}睢\end{CJK*}'' except for the indexing component. Thus, the most popular answer here is ``ju1''. Some other respondents mistakenly report ``zhi4'' (``\begin{CJK*}{UTF8}{gbsn}雉\end{CJK*}'') or ``zhui1'' (``\begin{CJK*}{UTF8}{gbsn}锥\end{CJK*}'') as they share the same component with  ``\begin{CJK*}{UTF8}{gbsn}睢\end{CJK*}''. Our algorithm correctly outputs ``sui1'' as the top-ranking answer and elicits level ``\textbf{sui1}$\rightarrow$ju1$\rightarrow$zhi4$\rightarrow$zhui1''. The second question asks the pronunciation of ``\begin{CJK*}{UTF8}{gbsn}滂\end{CJK*}''(pang1). Like we mentioned before, most Chinese characters are phono-semantic. When we do not know how to pronounce a character, we can answer an easier substituting question: pronouncing the phonetic component. However, sometimes this attempt does not work. Here though ``\begin{CJK*}{UTF8}{gbsn}滂\end{CJK*}'' is very commonly used, many people mistakenly pronounce ``pang2'' (``\begin{CJK*}{UTF8}{gbsn}旁\end{CJK*}'') since ``\begin{CJK*}{UTF8}{gbsn}旁\end{CJK*}'' is the phonetic component of ``\begin{CJK*}{UTF8}{gbsn}滂\end{CJK*}''. Our algorithm correctly outputs ``\textbf{pang1}$\rightarrow$pang2''. 

Though we only show the examples where plurality vote fails, for other questions that plurality vote works, our method illustrates a richer result than plurality vote, a hierarchy of the answers. For example, in one question we ask for the age of Li Shimin, Emperor Taizong of Tang, when he took the throne. We also ask the respondents whether the age is less than or greater than 50 years old as a priming effect. Both plurality vote and our algorithm correctly output the true answer, 28 years old. Our algorithm also outputs a hierarchy of all elicited answers ``\textbf{28}$\rightarrow$27$\rightarrow$30$\rightarrow$35$\rightarrow$40$\rightarrow$50'' and put the answers that use anchoring-and-adjustment heuristics \cite{tversky1974judgment} in lower levels. 

In our study, people make systematic errors like making a wrong statistical assumption (``1/2'' in Monty Hall), ignoring base rate (``80\%'' in Taxicab), using availability heuristic, i.e., relying on easy-to-search memories (``Yalu River'', ``Songhua River'', ``ju1'', ``Dec $25^{th}$, Jan $1^{st}$''), answering an easier substituting question (using greedy moves or moves that look elegant, pronouncing the phonetic component), and using anchoring-and-adjustment heuristics  (``40, 50'' for Li Shimin question)~\cite{tversky1974judgment}. Importantly, our empirical results show that without any prior, our algorithm labels these errors as less sophisticated thinking types.

\end{document}